\documentclass[10pt,aps,prd,
tightenlines,twocolumn,
superscriptaddress,
nofootinbib]{revtex4}
\usepackage{hyperref}

\usepackage{amsmath,amsfonts,epsfig,graphicx,amssymb,amsthm}
\usepackage{multirow}
\usepackage{makecell}
\usepackage{float}
\usepackage{slashed} 
\usepackage{subcaption}
\usepackage{xspace}
\usepackage{xcolor}
\usepackage{enumitem}
\setlist[itemize]{align=parleft,left=5pt..18pt}

\newtheorem{proposition}{Proposition}[section]

\newcommand{\pt}{\ensuremath{p_{\text{T}}}\xspace}
\newcommand{\arctanh}{\ensuremath{\mathrm{arctanh}\;}}
\newcommand{\xt}{$x$--$t$\xspace}
\newcommand{\yt}{$y$--$t$\xspace}
\newcommand{\zt}{$z$--$t$\xspace}
\newcommand{\xy}{$x$--$y$\xspace}
\newcommand{\xz}{$x$--$z$\xspace}
\newcommand{\yz}{$y$--$z$\xspace}
\newcommand{\etaphi}{$\eta$--$\phi$\xspace}
\newcommand{\lorentznetbase}{LorentzNet\textsubscript{base}\xspace}

\captionsetup{compatibility=false,justification=raggedright,singlelinecheck=false}

\begin{document}
\title{Does Lorentz-symmetric design boost network performance in jet physics?}

\author{Congqiao Li}
\email{congqiao.li@cern.ch}
\affiliation{
School of Physics and State Key Laboratory of Nuclear Physics and Technology, Peking University, 100871 Beijing, China}

\author{Huilin Qu}
\affiliation{
CERN, EP Department, CH-1211 Geneva 23, Switzerland}

\author{Sitian Qian}
\affiliation{
School of Physics and State Key Laboratory of Nuclear Physics and Technology, Peking University, 100871 Beijing, China}

\author{Qi Meng}
\affiliation{
Microsoft Research Asia, 100080 Beijing, China}

\author{Shiqi Gong}
\affiliation{
Academy of Mathematics and Systems Science, Chinese Academy of Sciences, 100190 Beijing, China}

\author{Jue Zhang}
\affiliation{
Microsoft Research Asia, 100080 Beijing, China}

\author{Tie-Yan Liu}
\affiliation{
Microsoft Research Asia, 100080 Beijing, China}

\author{Qiang Li}
\affiliation{
School of Physics and State Key Laboratory of Nuclear Physics and Technology, Peking University, 100871 Beijing, China}

\begin{abstract}
In the deep learning era, improving the neural network performance in jet physics is a rewarding task as it directly contributes to more accurate physics measurements at the LHC.
Recent research has proposed various network designs in consideration of the full Lorentz symmetry, but its benefit is still not systematically asserted, given that there remain many successful networks without taking it into account.
We conduct a detailed study on the Lorentz-symmetric design. We propose two generalized approaches for modifying a network---these methods are experimented on Particle Flow Network, ParticleNet, and LorentzNet, and exhibit a general performance gain.
We also reveal that the notable improvement attributed to the ``pairwise mass'' feature in the network is due to its introduction of a structure that fully complies with Lorentz symmetry.
We confirm that Lorentz-symmetry preservation serves as a strong inductive bias of jet physics, hence calling for attention to such general recipes in future network designs.

\end{abstract}

\keywords{high-energy physics, jet tagging, deep learning, Lorentz symmetry}

\maketitle


\section{Introduction}
\label{sec:intro}

Recent advancements in deep learning have had a profound impact on jet physics. Many common tasks for high-energy experimentalists have reached a new performance level with the use of deep learning techniques, which is otherwise unattainable with the classical theory-inspired approaches or shallow machine learning approaches. Jet physics tasks that already have experimental applications include jet tagging~\cite{ATLAS:2019bwq,CMS:2020poo}, jet property regression~\cite{CMS:2019uxx,CMS:2022psv}, etc. (see Ref.~\cite{Karagiorgi:2021ngt} for a recent review of deep learning applications). One major advantage deep learning approaches bring is that they directly allow proceeding with low-level data as input. Since jets are clustered from a list of initial particles, most jet datasets developed for deep learning studies are based on particle records. Regarding the representation of the particle records, the point-cloud (set) representation, which guarantees the permutational invariance of these particles, has gained increasing attention since it was proposed and developed~\cite{Henrion2017NeuralMP,Komiske:2018cqr,Qu:2019gqs}.

Improving the network performance in jet physics is a rewarding task, since advanced networks can be directly applied to real physics searches at the LHC experiments and substantially improve the sensitivity of measurements. In search of such enhancement, recent interests fall in experimenting with more advanced neural network architectures borrowing from the deep learning community, e.g., the graph neural networks (GNNs)~\cite{Henrion2017NeuralMP,Moreno:2019bmu,Qu:2019gqs,Mikuni:2020wpr,Dreyer:2020brq,Ju:2020tbo,Shlomi:2020gdn,Guo:2020vvt,Atkinson:2021nlt,Konar:2021zdg,Gong:2022lye,Ma:2022bvt} and Transformers~\cite{Mikuni:2021pou,pmlr-v162-qu22b,Qiu:2022xvr}, or injecting physics knowledge into the design of the network. For the latter, exploiting inherent symmetries in jet physics is widely studied.
The basic attempts rely primarily on data preprocessing. For instance, shifting the input jet to the center of the $\eta$--$\phi$ plain, an approach devised in early jet image representation~\cite{Cogan:2014oua,deOliveira:2015xxd}, ensures the network output is invariant under boosts on the $z$ axis (collider beam direction) or rotations on the $x$--$y$ plane.
Recently, efforts have been made to propose special network structures that respect certain symmetries. These includes networks invariant under boosts on $z$ axis or rotation on the $x$--$y$ plane~\cite{Qiu:2022xvr}, rotation on the $\eta$--$\phi$ plain~\cite{Shimmin:2021pkm} (or similarly, around the jet axis~\cite{Munoz:2022gjq}), boost along the jet axis~\cite{Munoz:2022gjq}, or even under full Lorentz transformations~\cite{bogatskiy2020lorentz,Gong:2022lye}.
Among these various symmetries, the Lorentz symmetry is considered the most fundamental, with all others being recognized as its subsymmetries.

The effort to incorporate the full Lorentz-symmetric design in the neural network first appeared in the introduction of the Lorentz layer~\cite{Butter:2017cot} and Lorentz Boost Network~\cite{Erdmann:2018shi}. The Lorentz Group Network (LGN)~\cite{bogatskiy2020lorentz} was devised not long ago to be fully equivariant under the Lorentz transformation. These network designs have attracted attention, but it is unclear to the community whether such a design could bring a real benefit since there lacks a controlled experiment studying with a similar network without the symmetric design. On the other hand, a noteworthy fact is that networks proposed in recent years that show leading performance in the context of jet tagging, including ParticleNet~\cite{Qu:2019gqs}, Attention-Based Cloud Net~\cite{Mikuni:2020wpr}, Point Cloud Transformer~\cite{Mikuni:2021pou}, and Particle Transformer (ParT)~\cite{pmlr-v162-qu22b}, still do not exploit the Lorentz-symmetric design at their cores. This poses an important question to the community: \textit{does Lorentz-symmetric design boost network performance in jet physics?}
The mentioned studies indicate that we may still lack an in-depth understanding to answer this question. 
More recently, LorentzNet~\cite{Gong:2022lye} was proposed which is fully equivariant to Lorentz transformations and surpasses ParticleNet in performance. The work includes an ablation study to demonstrate the performance gain by its symmetry-preserving design. Meanwhile, PELICAN~\cite{Bogatskiy:2022czk} also exhibits remarkable performance by solely exploiting Lorentz invariance in input features. These works utilize different approaches to preserve full Lorentz symmetries, and they all yield exceptional performances, bringing the topic to the forefront.
It therefore inspires the community to undertake a systematic study to answer the question and reveal the relations of a performant network with its Lorentz-symmetric design.

In this paper, we conduct a detailed study of the ``Lorentz-symmetric network designs.''
Our approach adheres to a general paradigm: building upon the original network, we focus on only a specific part of the network, i.e., a subnetwork, ensuring it maintains invariance under full Lorentz symmetry or its subsymmetries. This approach covers most attempts to incorporate Lorentz symmetry into networks, whether it be through the use of Lorentz-invariant inputs or by introducing dedicated network modules that keep these symmetries. The outcome is consistent: a part of the network, along with all its neurons, remains invariant under some types of Lorentz transformations. We broadly conclude this approach as Lorentz-symmetric network designs.
It is worth noting that other approaches involve embedding Lorentz equivariance within the network (e.g., Refs~\cite{bogatskiy2020lorentz,Hao:2022zns}), but they are often specialized in their designs and less easily generalizable. Thus, we reserve their exploration for future studies.
In our approach, we can make the subnetwork relatively small, hence treating it as a ``patch structure'' of the baseline network. By switching the baseline networks or changing the symmetry-related properties of the patch structure, we are able to systematically study how Lorentz-symmetric network designs influence the network performance.

Under this approach, we observe a general performance gain when incorporating Lorentz-symmetric designs in the context of jet tagging.
Our study first shows that the network performance can be improved as long as our focused patch structure keeps invariance under the Lorentz transformation, %
without the need to allow the network to respect the Lorentz symmetry fully.
The studies are based on two general proposals to integrate the Lorentz-symmetric subnetwork structures into the original network, where for the original network, we consider three baseline options for generality:
Particle Flow Network (PFN)~\cite{Komiske:2018cqr}, ParticleNet~\cite{Qu:2019gqs}, and a modified LorentzNet~\cite{Gong:2022lye}.
The experiment is complemented by a series of validations, demonstrating that the observed enhancements come from adherence to more types of Lorentz subsymmetries, progressing until the full Lorentz symmetry is attained. In addition, an important conclusion drawn from our study is the recognition of Lorentz symmetry as a valuable inductive bias in jet physics. This insight can potentially benefit a variety of jet-related tasks in future network designs.

The rest of the paper is organized as follows. In Sec.~\ref{sec:lorentz}, we review the Lorentz symmetry and discuss its specific form in jet physics. In Sec.~\ref{sec:lorentz-patch}, we devise two generalized patch structures that are invariant under Lorentz symmetry and bring a performance gain, supplemented by experiments to reveal the reason for improvements. Section~\ref{sec:conclusions} concludes our main results. Section~\ref{sec:outlook} discusses some future prospects in tagger design.

\section{Lorenz symmetry}\label{sec:lorentz}

\subsection{Lorentz transformations}

In Minkowski four-dimentional spacetime $\mathbb{R}^{1,3}$, a Lorentz vector $a^\mu$ has four components $(a^0, a^1, a^2, a^3)$, which correspond to the $t$, $x$, $y$, and $z$ components. The Minkowski metric $\eta_{\mu\nu} = \mathrm{diag}(+1, -1, -1, -1)$ defines the inner product of two Lorentz vectors $a^\mu b^\nu \eta_{\mu\nu} = a^0 b^0 - a^1 b^1 - a^2 b^2 - a^3 b^3$.
Lorentz transformations are linear transformations $\Lambda^\mu_{\;\;\nu}$ that preserve the Minkowski metric: $g_{\mu\nu}\Lambda^{\mu}_{\;\;\alpha}\Lambda^{\nu}_{\;\;\beta} = g_{\alpha\beta}$. 
Hence, the inner product of two Lorentz vectors remains unchanged. The Lorentz transformations that preserve the direction of time form the orthochronous Lorentz group, $\mathrm{SO}^{+}(1, 3)$.

The infinitesimal transformations in $\mathrm{SO}^{+}(1, 3)$ include 6 degrees of freedom. From the physics interpretation, these include three types of rotation in the space dimensions (we denote them as \xy, \xz, and \yz rotation in what follows), and three types of Lorentz boosts involving the time dimension (denoted as \xt, \yt, and \zt boost). Here we consider the finite-size transformations in the mathematical form. Taking \xy rotation and \zt boost as an example, the \xy rotation is presented as
\begin{equation}
\Lambda^{\mu}_{\;\;\nu} = 
    \begin{pmatrix}
    1\quad\quad &  &  & \\
      & \cos\alpha\; & -\sin\alpha & \\
      & \sin\alpha\; & \cos\alpha & \\
      &  &  & \quad1
    \end{pmatrix},
\end{equation}
and the \zt boost has the form
\begin{equation}
\Lambda^{\mu}_{\;\;\nu} = 
    \begin{pmatrix}
    \;\cosh w & & & \sinh w\;  \\
     & \;1\;\; & & \\
     &  & \;1\;\; & \\
    \;\sinh w &  &  & \cosh w\; 
    \end{pmatrix},
\end{equation}
where $\alpha$ stands for the rotation angle and $w$ for the boost rapidity.

\subsection{Lorentz symmetry for jet physics}\label{sec:lorentz-symm-jet}

In the context of jet physics, a jet is a collinear spray of particles produced in high-energy collisions. When presenting it to the jet network, a jet is composed of a list of particles, where each particle carries the Lorentz vectors $p^\mu$---its energy-momentum vector, and some Lorentz scalars, e.g., the particle ID\footnote{For some jet datasets, each particle may also include information from trajectory displacement that, by geometry, cannot be presented in the forms of Lorentz scalars or vectors. These are not included in our study.}.
For jets appearing in the ATLAS or CMS detector at the LHC, it is conventional to define the $z$ axis pointing to the beamline direction, and the \xy plane as the transverse plane. It is an inherent aspect of hadron colliders that the physics properties of an event and the jets it produces remain unchanged when all postcollision particles undergo \zt boost and \xy rotation. Therefore, it is conventional for the output of the jet network to be invariant under these two transformations.

Additionally, for ATLAS or CMS experiments, the particle is generally considered in the relativistic limit, as the mass of the particle is on the level of $o(0.1)$ GeV, which is smaller by 1--4 orders of magnitude than its momentum or energy. For applications to feed the jet kinematics features into the deep neural network, the requirements for float number precision are not very demanding. Therefore, it is safe to make the following assumption:
\begin{equation}
    p^\mu p_\mu = 0.
\end{equation}

Important features for jet physics include pseudorapidity $\eta$ and azimuthal angle $\phi$. In the relativistic limit, we have
\begin{align}\label{eq:etaphi}
\begin{split}
    \eta &= \arctanh \frac{p_z}{E}, \\
    \phi &= \arctan \frac{p_y}{p_x}.
\end{split}
\end{align}
Note that a \zt boost by rapidity $y_z$ and an \xy rotation by angle $\alpha_z$ to a particle with $(\eta,\,\phi)$ should directly result in $(\eta',\,\phi') = (\eta + y_z,\,\phi + \alpha_z)$.

A neural network applied to the jet physics tasks is considered to preserve the Lorentz symmetry if its output score is invariant when the input jet undergoes any Lorentz transformation. In this case, the nodes of the neural network can either be invariant, which means the nodes are Lorentz scalars, or be equivariant to the transformations, meaning that they are part of the vector or high-order tensor in the Lorentz group representation. As an application of this scenario, the LGN includes nodes that are Lorentz scalars, vectors, and high-order tensors~\cite{bogatskiy2020lorentz}; meanwhile, LorentzNet is constructed by nodes only from Lorentz scalars and vectors~\cite{Gong:2022lye}.

It is also possible that the network is only invariant or equivariant to certain kinds of transformations. As Sec.~\ref{sec:intro} mentions, there are generally two means to respect certain symmetries when designing networks.
One simple approach is to use input data that are invariant to a kind of symmetry. This typically involves a data preprocessing stage before inputting the data into the network. The following discussion refers to this as the ``data engineering'' approach.
For example, particle-level features $\pt$, $\Delta \eta$, $\Delta \phi$, or $\Delta R$ are invariant under the \zt boost and \xy rotation\footnote{The defination of these variables are $\pt = (p_x^2 + p_y^2)^{\frac{1}{2}}$, $\Delta \eta = \eta - \eta_{\text{jet}}$, $\Delta \phi = \phi - \phi_{\text{jet}}$, and $\Delta R = (\Delta\eta^2 + \Delta\phi^2)^{\frac{1}{2}}$.}.
Therefore, designing any form of the neural network will maintain the invariance property of the output score to any \zt boost and \xy rotation. This implies that one can use $\Delta \eta$, $\Delta \phi$ instead of $\eta$, $\phi$ of the particle to preserve this symmetry.
We note this approach is generally adopted by most network implementations that utilize the particle-level features as input. Its origin dates back to the early convolutional neural network (CNN) approaches~\cite{Cogan:2014oua,deOliveira:2015xxd}, where a standard preprocessing is always applied to reposition the jet image on the \etaphi plane to be centered at $(0,\,0)$. In addition, it is common for these CNN methodologies to apply additional preprocessing to rotate the jet image on the \etaphi plane into a standardized orientation. Thus, the rotational symmetry on the \etaphi plane is further maintained.

In addition to the data engineering approach mentioned above, another solution is to specially design the network so that its output is invariant under a certain group of transformations. For instance, the Particle Convolution Network~\cite{Shimmin:2021pkm} introduces dedicated convolution on \etaphi space so that the symmetry under ``rotation'' on the \etaphi plane is maintained. The Covariant Particle Transformer~\cite{Qiu:2022xvr} has its Transformer block designed to be equivariant under the Lorentz \zt boost and \xy rotation.

The above facts show that many networks have considered incorporating symmetry in their design, whether implicitly or explicitly, but the question is this: do we have a systematic way to understand and categorize these symmetries? How are these symmetries related to the largest symmetry group---the orthochronous Lorentz group? We interpret it through the following theoretical analysis.

For each jet, we first deliver a \zt boost and \xy rotation to the jet to have $(\eta,\,\phi) = (0,\,0)$.
An equivalent way of understanding this operation is to perform a translation on the jet's \etaphi plane representation such that the axis of the jet points at the $(\eta,\,\phi)$ origin. Note that the jet axis is now fixed at the $x$ axis in the three-dimensional view.
Given that we have fixed 2 degrees of freedom out of 6, there are 4 additional degrees of freedom to Lorentz transform the jet. As illustrated in Fig.~\ref{fig:lorentz-vis}, the four transformations are \yz rotation, \xt boost, $z$-tilt, and $y$-tilt. The latter two are a mixture of \zt boost with \xz rotation and a mixture of \yt boost with \yz rotation. Note that the \zt boost and \xz rotation are not commutable, similar to \yt boost with \yz rotation---we adopt the convention to first deliver the boost, followed by the rotation in the following context.

\begin{figure*}[tb]
\begin{center}
\centerline{\includegraphics[width=\textwidth]{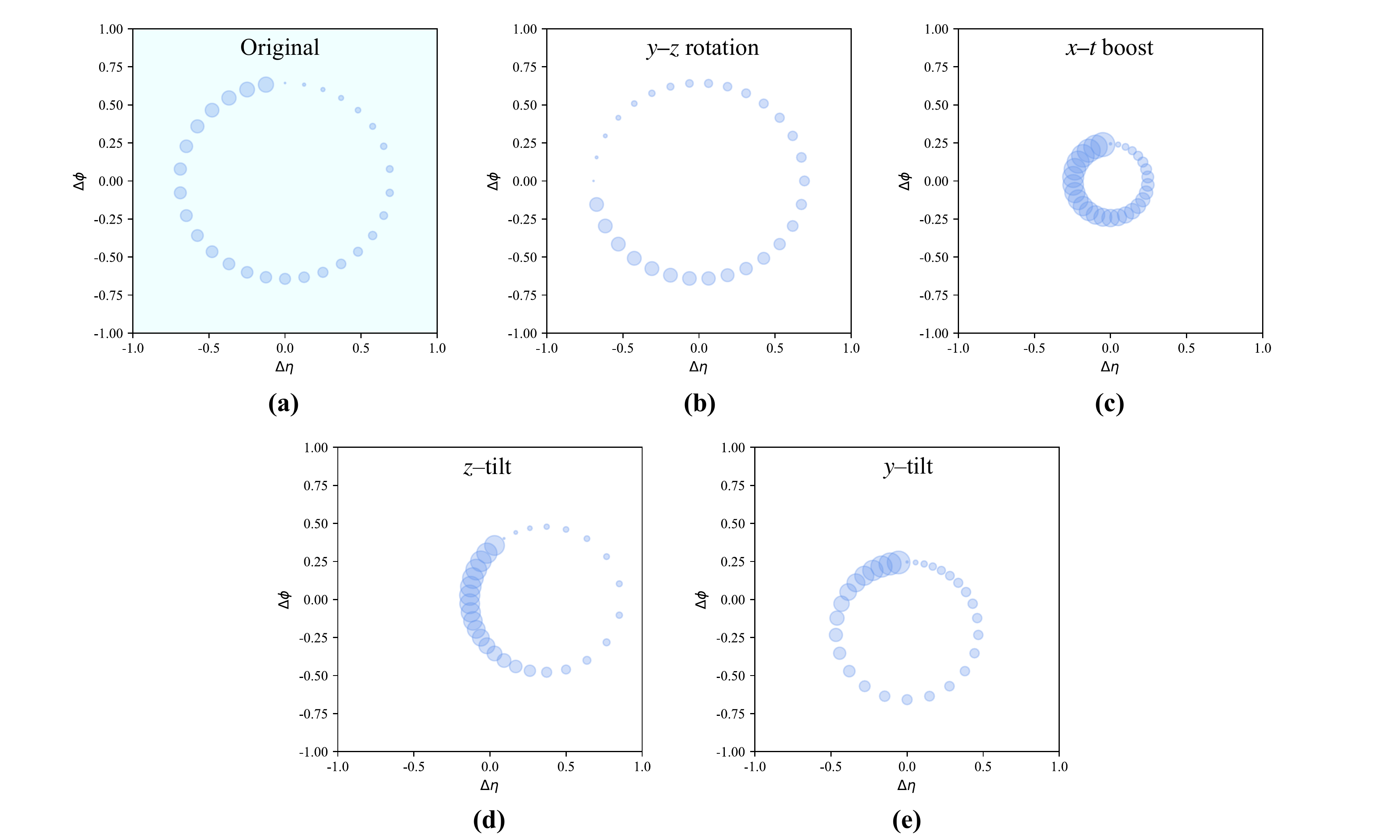}}
\caption{Illustration of a toy jet on the (a) \etaphi plane and its behavior when it undergoes the four types of Lorentz transformation that maintain the jet axis directing to the $x$ axis or, equivalently, the origin of the \etaphi plane. The four types of Lorentz transformation are (b) \yz rotation, (c) \xt boost, (d) $z$ tilt, and (e) $y$ tilt. Markers in the plot represent the constituent particles of the jet, where the size of the marker represents the \pt of the particle.}
\label{fig:lorentz-vis}
\end{center}
\end{figure*}

From Figs.~\ref{fig:lorentz-vis} (a) and \ref{fig:lorentz-vis} (b), it is then clear that the previously discussed \etaphi rotation is an approximate \yz rotation, when the jet is fixed at the origin of the \etaphi plane. The approximation holds in the limit $p_y\sim o(E)$ and $p_z\sim o(E)$ (i.e., $p_{y,z} \ll E$). Define rapidities
\begin{align}
\begin{split}
    y_y &= \arctanh \frac{p_y}{E}, \\
    y_z &= \arctanh \frac{p_z}{E},
\end{split}
\end{align}
we have $y_{y,z}\sim o(1)$.
According to Eq.~(\ref{eq:etaphi}), we have 
\begin{align}\label{eq:etaphiapp}
\begin{split}
    \eta &= \frac{p_z}{E} + o(y_z), \\
    \phi &= \frac{p_y}{E} + o(y_y, y_z).
\end{split}
\end{align}
This proves that \etaphi rotation is essentially an approximate \yz rotation. Reference~\cite{Shimmin:2021pkm} also considers the possibility of adding an invariance property on \xt boost to the network design. We reveal that they can all be grouped into our four transformation prototypes.

\section{Lorentz-symmetric patches}\label{sec:lorentz-patch}

Given the above background, we perform the studies following the aforementioned Lorentz-symmetric network design paradigm.
We isolate a specific part of the baseline network, known as a patch network structure, and make it invariant under one or some of the four transformations. The invariance is achieved by inputting invariant features under certain transformations to the patch structure. In order to isolate such a patch structure, we either put a ``patch'' to the well-established baseline network or choose a certain part of the baseline network and isolate it. The specific means are elaborated in the subsections as follows. The experiment is delivered to compare different symmetric design scenarios, evaluate if there is performance gain, and study its relation to the additional symmetries brought to the system. 

\subsection{Incorporating pairwise features}

As a starting point, we consider the scheme that the subnetwork is fully invariant under Lorentz transformations. Therefore, all inputs to the subnetwork are required to be Lorentz scalars. Given $N$ particles with Lorentz vectors $p_i^\mu$, the only possible Lorentz scalars constructed are $m_{ij}^2 = (p_i)^\mu (p_j)_\mu$. They are denoted as pairwise mass features in the following context. This approach will bring $N(N-1)/2$ pairwise features into the network. To incorporate these pairwise features, we choose GNNs as our baselines because they exploit the edge features and deliver message-passing between nodes by design. Hence, we use ParticleNet~\cite{Qu:2019gqs} and LorentzNet~\cite{Gong:2022lye} as our baseline models for the study.

In previous works (e.g., ParT~\cite{pmlr-v162-qu22b}), the pairwise mass has been studied and found to be helpful in improving network performance. In this work, we hope to go one step forward to understand the logic for such improvement. We discover that the underlying reason lies in symmetry preservation---we reveal this fact by studying many possible options to construct a pairwise variable that respects different levels of Lorentz symmetry, as collated in Sec.~\ref{sec:lorentz-symm-jet}.

\subsubsection{Variables}
The following pairwise variables are chosen in our study.

\paragraph*{Pairwise mass.}
$m_{ij} = (p_i^\mu p_{j,\mu})^{\frac{1}{2}}$ is known to be invariant under all types of Lorentz transformations.

\paragraph*{Pairwise $\Delta R$.}
$\Delta R_{ij} = (\Delta\eta_{ij}^2 + \Delta\phi_{ij}^2)^{\frac{1}{2}}$ is another physics-motivated variable that measures the angular separation of two particles. It is interesting to figure out that the variable is not only invariant under \zt boost and \xy rotation, but also invariant under rotation in the \etaphi plane, hence an approximate invariance under \yz rotation when the jet direction points to the $x$ axis.

\paragraph*{Pairwise \pt-weighted $\Delta R$.}
Inspired by the symmetry perspective, we consider a new type of angular separation angle that is further approximately invariant under the \xt boost. From Figs.~\ref{fig:lorentz-vis} (a) and \ref{fig:lorentz-vis} (c), we see that an \xt boost in the positive direction decreases the angles between particles while raising the transverse momentum \pt. It can be proved that they are in inverse proportion in the limit of $y_{y,z}\sim o(1)$ for all particles (see proof in Appendix~\ref{sec:proof}). Hence, we construct from pure mathematics the new pairwise variable, $\Delta R_{ij}(p_{\text{T},i} + p_{\text{T},j})$.

\paragraph*{Pairwise energy.}
To design the ablation experiment, the pairwise energy variables $E_{ij} = E_i + E_j$ are chosen that, in general, violate the Lorentz symmetry. They do not obey the two basic symmetries, i.e., the \zt boost and the \xy rotation, but they are actually invariant under any rotation in 3D space.

Table~\ref{tab:pairwise-vars} summarizes the four constructed variables and their invariance property under the specific types of Lorentz transformation.

\begin{table*}[tb]
\setlength{\tabcolsep}{8pt}
\caption{Invariance property of the pairwise variables between particles when the jet undergoes a certain type of Lorentz transformation.}
\label{tab:pairwise-vars}
\begin{center}
\resizebox*{1\textwidth}{!}{
\begin{tabular}{l|cccccc}
\hline\hline
Pairwise variable & \zt boost & \xy rotation & \makecell[c]{\yz rotation\\ ($y_{y,z}\sim o(1)$)} & \makecell[c]{\xt boost\\ ($y_{y,z}\sim o(1)$)} & \makecell[c]{$z$ tilt\\ ($y_{y,z}\sim o(1)$)} & \makecell[c]{$y$ tilt\\ ($y_{y,z}\sim o(1)$)} \\
\hline
$m_{ij}^2$ & \checkmark & \checkmark & \checkmark & \checkmark & \checkmark & \checkmark \\ 
$\Delta R_{ij}$ & \checkmark & \checkmark & \checkmark  & & &  \\
$\Delta R_{ij}(p_{\text{T},i} + p_{\text{T},j})$ & \checkmark & \checkmark & \checkmark & \checkmark & &  \\
\hline
$E_{ij}$ (ablation study) &  & \checkmark & \checkmark & & & \\
\hline\hline
\end{tabular}
}
\end{center}
\end{table*}

\subsubsection{Baseline networks}

Our baseline neural network should satisfy two requirements. First, the network has a GNN backbone so as to have an intrinsic mechanism to incorporate edge features. Second, the network has not, by design, included the above variable. We use ParticleNet and the ``weakened'' LorentzNet, named \lorentznetbase, as our baseline networks. A detailed description is as follows.

ParticleNet has satisfied the two requirements by default~\cite{Qu:2019gqs}. LorentzNet by design uses the pairwise masses to build edge features in each unit block. Specifically, the edge feature is constructed by concatenating Lorentz scalar node features for two connecting nodes $h_i$ and $h_j$, with the mass variables, namely, $\Vert p_i + p_j \Vert^2$ and $p_i^\mu p_{j,\mu}$~\cite{Gong:2022lye}. We simply remove the two mass variables in the construction of the edge feature. Furthermore, we find that by completing all node input variables in LorentzNet such that they are the same as the ParticleNet variables, there is no performance loss, although some variables are not Lorentz scalars, which violates the spirit of the original LorentzNet design. Adding these additional node features can, however, improve the network performance in the case where we remove the mass in edge features, which is as expected. In this way, we created a specific version of LorentzNet that is more similar to ParticleNet. The modified LorentzNet model is denoted by \lorentznetbase. Like ParticleNet, it does not hold the Lorentz-invariant or equivariant properties. 

\subsubsection{Patch structure}\label{sec:pair-patch-design}

We then introduce how to incorporate pairwise features into the baseline network.

\paragraph*{ParticleNet.} ParticleNet is built from a stack of EdgeConv~\cite{wang2019dynamic} layers that perform a ``graph convolution'' on a point cloud. It includes an intrinsic message-passing mechanism for each node with their $k$ nearest neighbors. Specifically, for each node $i$ carrying features $\mathbf{x}_i$, consider its neighboring nodes $i_j$ $(j=1,...,k)$, the message $(\mathbf{x}_{i_1},\,...,\,\mathbf{x}_{i_k})$ is passed to target node $i$. Hence, there leaves space to include manually designed pairwise features between the node $i$ and $i_j$. Figure~\ref{fig:patch} (a) illustrates the patch structure we introduce to the original ParticleNet model. To begin with, the $N(N-1)/2 $ pairwise features are calculated. Starting from initial feature dimension 1, they are embedded in latent space with dimension 64 by an elementwise multilayer perceptron (MLP), via two hidden layers both with feature dimension 64. The embedded feature is denoted by $\mathbf{U}_{ij}$ for nodes $i$ and $j$, and then it proceeds to all the EdgeConv blocks. For each EdgeConv block, the new message conveyed from neighboring nodes can be constructed by $(\mathbf{U}_{i,i_1},\,...,\,\mathbf{U}_{i,i_k})$. The feature vectors of $\mathbf{U}_{ij}$ are directly added to the original message, after passing an individual linear layer to match their dimensions. This is represented by the following equation and depicted in Fig.~\ref{fig:patch} (a):
\begin{align}\label{eq:pair-conv2d}
\begin{split}
    \mathbf{x}'_{i_j} = \mathbf{x}_{i_j} + \text{\texttt{Linear}}(\mathbf{U}_{i,i_j})\quad (j = 1,\cdots,k).
\end{split}
\end{align}

\paragraph*{\lorentznetbase.} The implementation of the pairwise feature to \lorentznetbase is much easier, as we directly adopt the intrinsic mechanism of the original LorentzNet to incorporate pairwise features. We note that there are two main differences with ParticleNet implementation. First, the pairwise features are repeated for calculation in each unit layer, as in LorentzNet, and the vectors are updated dynamically layer by layer, hence the pairwise features also change. Second, due to the fact that LorentzNet is essentially a fully connected GNN, pairwise features for all $N(N-1)/2$ pairs participate in the network.

\begin{figure*}[!tb]
\begin{center}
\begin{subfigure}[b]{\textwidth}
\centering
\includegraphics[width=\textwidth]{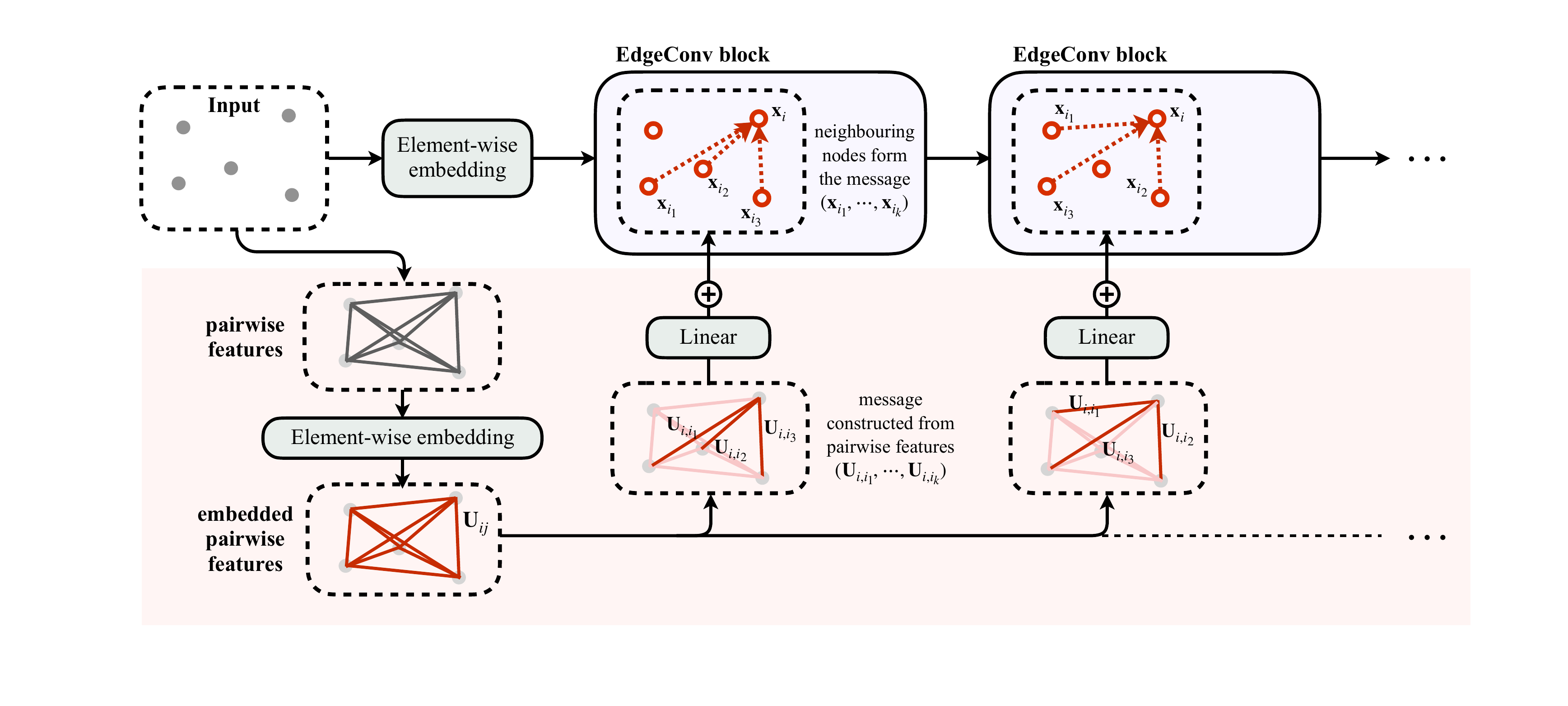}
\caption{The patch structure introduced for ParticleNet to incorporate the pairwise features of the input particles. The patch structure is drawn in a red background to be distinguished from the original ParticleNet structure. The pairwise features, after embedded, are integrated into each of the EdgeConv blocks according to the intrinsic $k$-nearest neighbours mechanism to define pairs of particles.}
\end{subfigure}
\begin{subfigure}[b]{\textwidth}
\centering
\includegraphics[width=\textwidth]{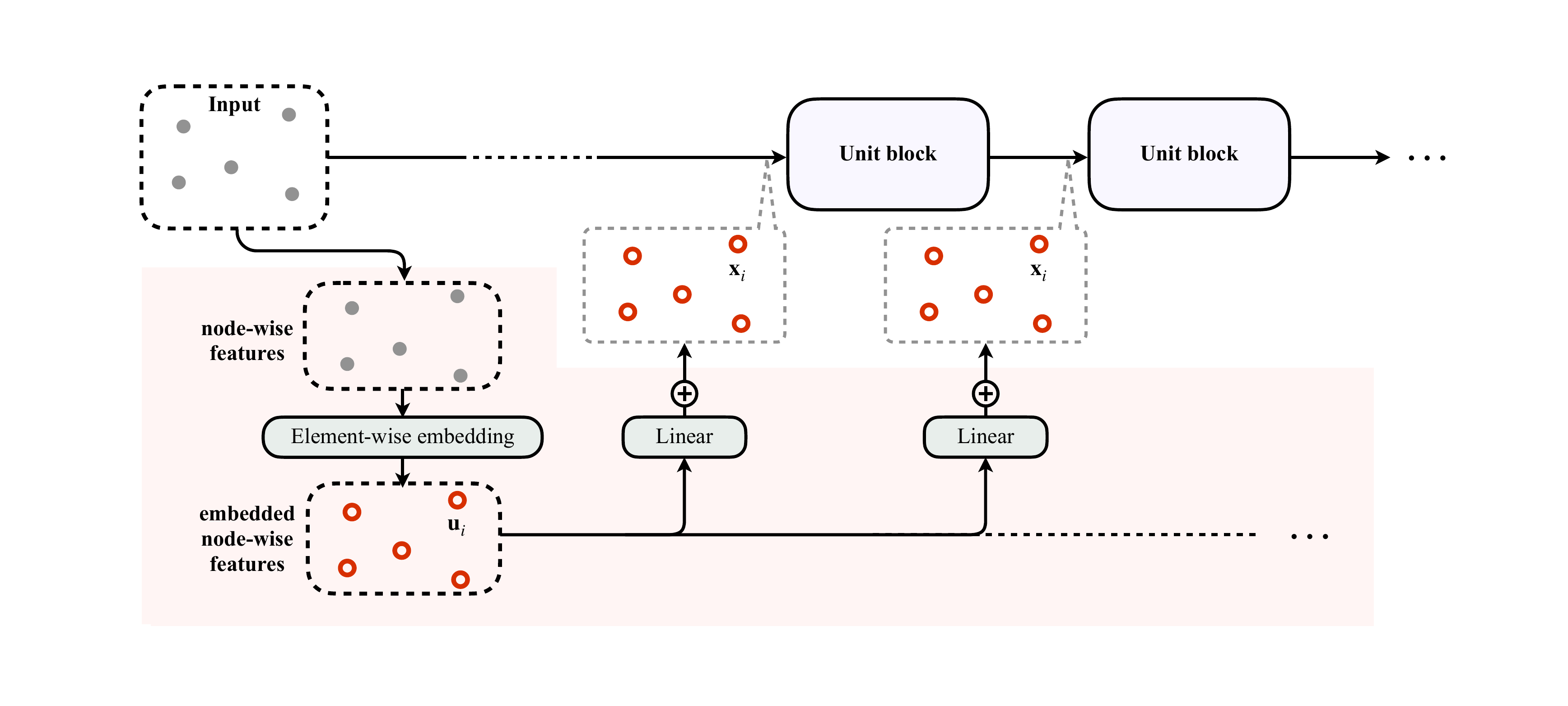}
\caption{The generalized patch structure introduced for all baseline networks to incorporate additional nodewise features. The patch structure is drawn in a red background to be distinguished from the original network structure. The nodewise features, after embedded, are integrated by ``summation'' to the latent space features fed into every unit layer of the baseline network.}
\end{subfigure}
\caption{Illustration of the patch structure introduced to the baseline networks that incorporates (a) the pairwise and (b) nodewise features.}
\label{fig:patch}
\end{center}
\end{figure*}

\subsubsection{Experiments}\label{sec:pair-exp}

The network performance is assessed in the task of jet tagging, which is a classification task to identify the origin of a jet.
Our experiments are performed on two datasets: the top tagging dataset~\cite{kasieczka2019top} and the JetClass dataset~\cite{pmlr-v162-qu22b}.
The top tagging dataset is a jet dataset comprising $1.2\times 10^{6}$ jets for training. It includes two sets of large-radius jets, one initiated from the top quark and the other from the QCD events initiated by quarks and gluons. Events in this dataset are generated by \textsc{pythia8}~\cite{Sjostrand:2014zea} and passed to \textsc{delphes}~\cite{deFavereau:2013fsa} for fast simulation of the detector effect. Jets are clustered from the E-flow objects with the anti-$k_{\rm T}$ algorithm~\cite{Cacciari:2008gp}. The kinematics information of the jet constituents is used as the only input source to train the network. The JetClass dataset is a larger jet dataset with $100\times 10^{6}$ jets for training. It is composed of ten classes of large-radius jets, including five decay modes of the Higgs boson, two decay modes of the top quark, two initiated by $Z$ and $W$ bosons, and one from the QCD event initiated by quarks and gluons. Events in this dataset are first generated by \textsc{MadGraph5\_mc@nlo}\cite{Alwall:2014hca} for resonance production and their decay, then they proceed to \textsc{pythia8}~\cite{Sjostrand:2014zea} for parton showering and \textsc{delphes}~\cite{deFavereau:2013fsa} for detector effect simulation. Jets are reconstructed similarly from the E-flow objects. The constituent-level features are used as the input to the jet network, including the kinematics information, particle identification flags, and trajectory displacement features.

In the following experiments, we use the top tagging dataset as the main benchmark dataset to study how Lorentz-symmetric network designs influence the network performance in various aspects. Particularly, we limit our training to $60\,000$ jets, as our findings indicate that for the top tagging benchmark, utilizing the entire $1.2\times 10^{6}$ jets for training leads to a saturation in network performance. This obscures the distinctions when we test with various top-performing networks and switch their subcomponents in our studies.
To validate that the conclusion is applicable to a wide range of data sizes, we perform an experiment on different sizes of the top tagging task and JetClass's ten-class classification task, covering the data size from 6000 to $100\times 10^{6}$.


The training setup is the same with Ref.~\cite{pmlr-v162-qu22b}, only with a proper resetting of the batch size to cooperate with the more complex computation when pairwise features are involved. The LorentzNet model is trained in the same optimizer and scheduler as in the ParticleNet case. A detailed description of the training setup is presented in Appendix~\ref{sec:config}.

Table~\ref{tab:pair-perf} shows the evaluation results for different network designs. A number of metrics are used for evaluation, including the accuracy, area under the receiver operating characteristic curves (AUCs), and background rejection $1/\epsilon_{\rm B}$ at a certain level of signal efficiency at 50\% and 30\%. The uncertainties correspond to the standard deviation in ten trainings. Several findings can be extracted from the table, with some explanations.

\begin{table*}[tb]
\setlength{\tabcolsep}{10pt}
\renewcommand{\arraystretch}{1.1}
\caption{Performance of the baseline network and the one supplemented by the pairwise patch structure with different variable designs. The baseline network is chosen from ParticleNet and \lorentznetbase. The model is trained on $60\,000$ jets from the training data of the top tagging dataset and evaluated on the full test data. The uncertainty is calculated from the standard deviation over ten trainings. For each metric, the best-performing networks from both ParticleNet and \lorentznetbase variants are highlighted in bold text.}
\label{tab:pair-perf}
\begin{center}
\begin{tabular}{ll|cccc}
\hline\hline
Base model  & Variation & Accuracy   & AUC   & \makecell[c]{$1/\epsilon_{\rm B}$\\$(\epsilon_{\rm S}=50\%)$}   & \makecell[c]{$1/\epsilon_{\rm B}$\\$(\epsilon_{\rm S}=30\%)$}     \\
\hline
\multirow{5}{*}{ParticleNet} &  ---     & $0.9310(3)$   & $0.9810(2)$  & $198\pm 7$   & $640\pm 29$   \\
&  +pairwise: $m_{ij}$        & $0.9334(8)$   & $0.9820(4)$  & $222\pm 13$   & $722\pm 52$   \\
&  +pairwise: $\Delta R_{ij}$  & $0.9334(6)$   & $\bf 0.9823(3)$  & $\bf 231\pm 10$   & $\bf 752\pm 43$   \\
&  +pairwise: $\Delta R_{ij}(p_{{\rm T},i} + p_{{\rm T},j})$  & $\bf 0.9337(3)$   & $0.9821(1)$  & $223\pm 6$   & $741\pm 36$   \\
&  +pairwise: $E_{ij}$     & $0.9303(5)$   & $0.9807(2)$  & $200\pm 6$   & $651\pm 23$   \\
\hline
\multirow{5}{*}{LorentzNet\textsubscript{base}} &  ---     & $0.9276(12)$   & $0.9789(7)$  & $172\pm 13$   & $581\pm 53$   \\
&  +pairwise: $m_{ij}$      & $\bf 0.9347(4)$   & $\bf 0.9829(2)$  & $\bf 260\pm 6$   & $\bf 931\pm 50$   \\
&  +pairwise: $\Delta R_{ij}$     & $0.9328(4)$   & $0.9819(3)$  & $232\pm 10$   & $807\pm 35$   \\
&  +pairwise: $\Delta R_{ij}(p_{{\rm T},i} + p_{{\rm T},j})$     & $0.9342(4)$   & $0.9826(2)$  & $251\pm 6$   & $919\pm 34$   \\
&  +pairwise: $E_{ij}$       & $0.9243(37)$   & $0.9767(23)$  & $144\pm 29$   & $485\pm 108$   \\
\hline\hline
\end{tabular}
\end{center}
\vskip 0.2in
\end{table*}

\begin{itemize}
\item For both ParticleNet and \lorentznetbase experiments, the network incorporating variables $m_{ij}$, $\Delta R_{ij}$, and $\Delta R_{ij}(p_{{\rm T},i} + p_{{\rm T},j})$ performs better than without using the pairwise features, and with injecting $E_{ij}$ with no dedicated symmetric design.
\item Comparing the three scenarios when cooperating with $m_{ij}$, $\Delta R_{ij}$, and $\Delta R_{ij}(p_{{\rm T},i} + p_{{\rm T},j})$, the ParticleNet experiment is more saturated in performance. On the other hand, \lorentznetbase cooperating with $m_{ij}$ and $\Delta R_{ij}(p_{{\rm T},i} + p_{{\rm T},j})$ are found to be more performant then $\Delta R_{ij}$. The latter finding matches to some degree with the fact that these two variables respect more underlying subsymmetries, showcased in Table~\ref{tab:pairwise-vars}.
\end{itemize}

To further reveal the relations between performance differences with the role in subsymmetries preservation, we evaluate the drop in performance when the test sample is processed by a given type of Lorentz transformation.

We study the case of \yz rotation, \xt boost, and $z$-tilt subsymmetries. Given the limitation that these transformations are done when the jet is directed to the $x$ axis, for a given jet, the overall transformation is described as $(\Lambda_0)^{-1}(\Lambda_{\rm target})(\Lambda_0)$, where $\Lambda_0$ is a successive \zt boost and \xy rotation to ensure the jet points to $x$ direction, and $\Lambda_{\rm target}$ is the target transformation (i.e., \yz rotation, \xt boost, or $z$ tilt). The performance in AUC under various test datasets transformation is shown in Fig~\ref{fig:lorentz-attk-pair}.

\begin{figure*}[tb]
\begin{center}
\includegraphics[width=0.32\textwidth]{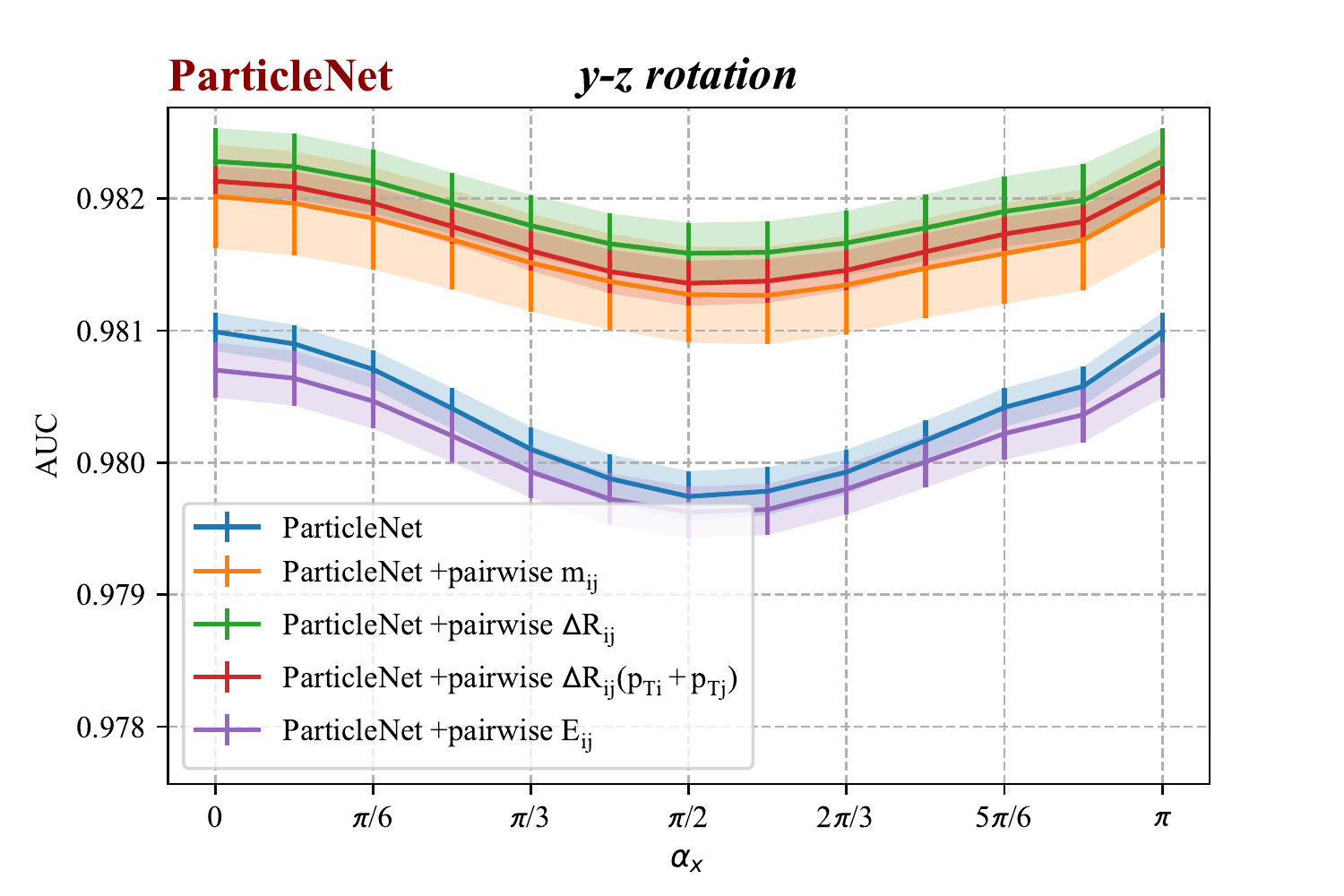}
\includegraphics[width=0.32\textwidth]{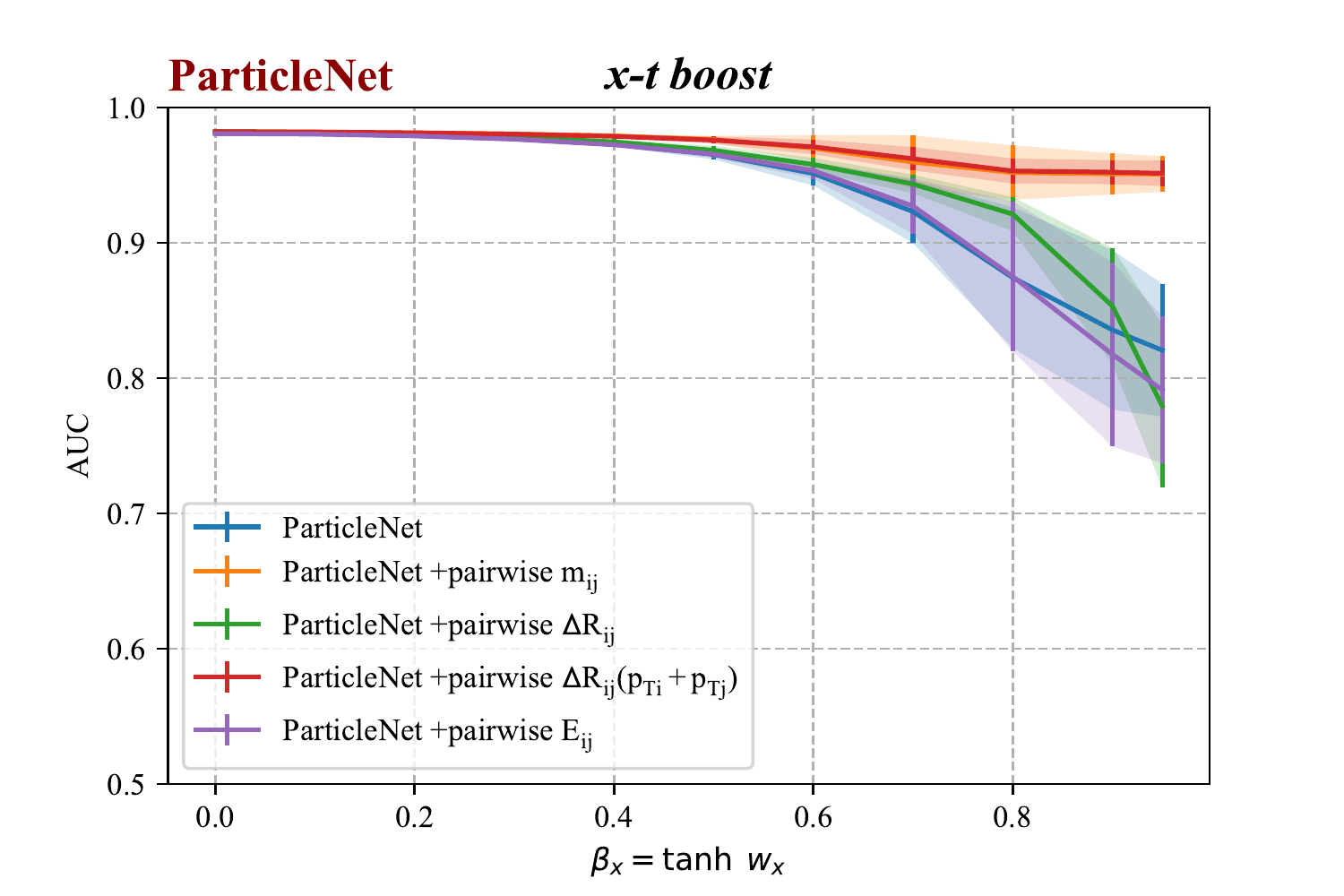}
\includegraphics[width=0.32\textwidth]{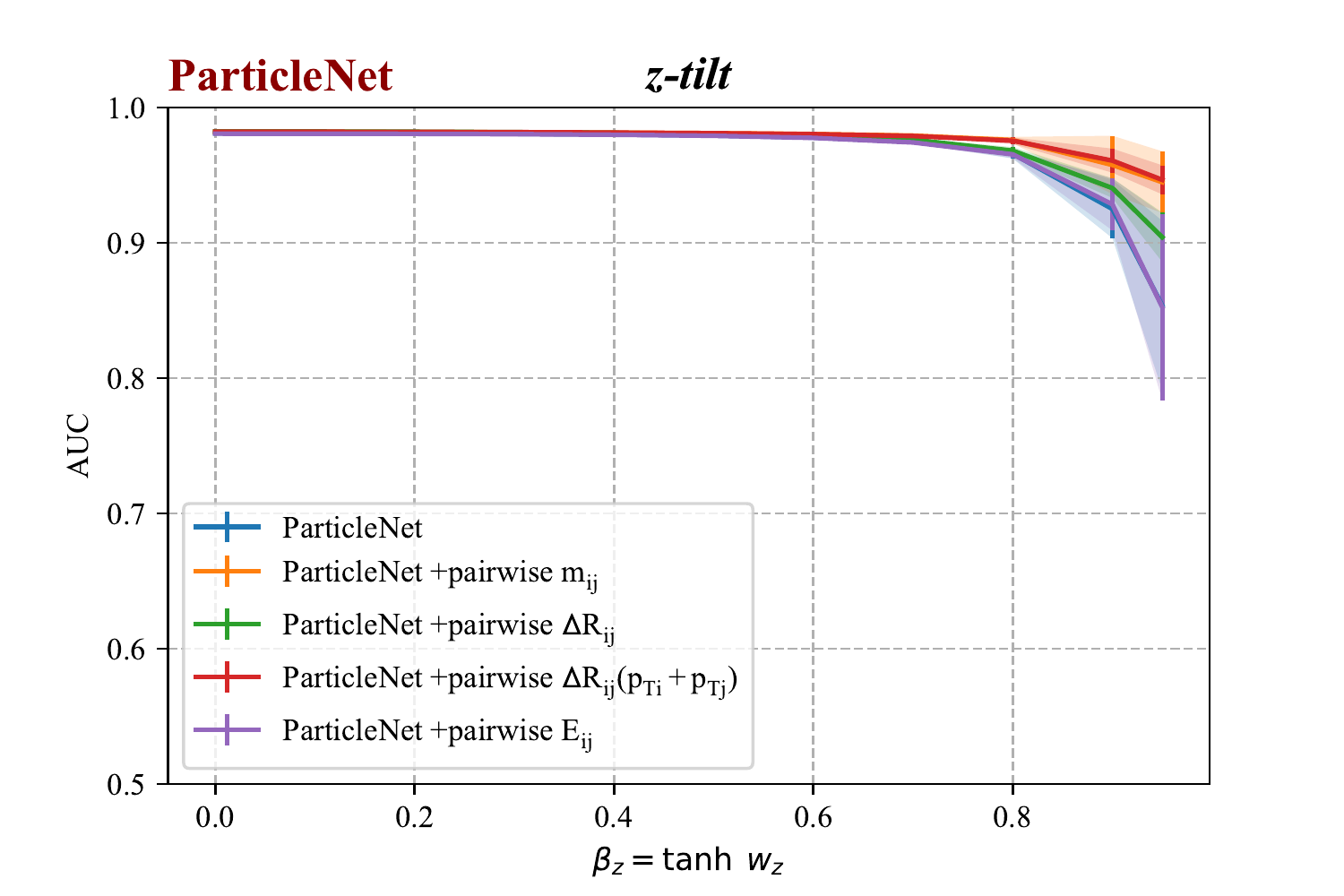} \\
\includegraphics[width=0.32\textwidth]{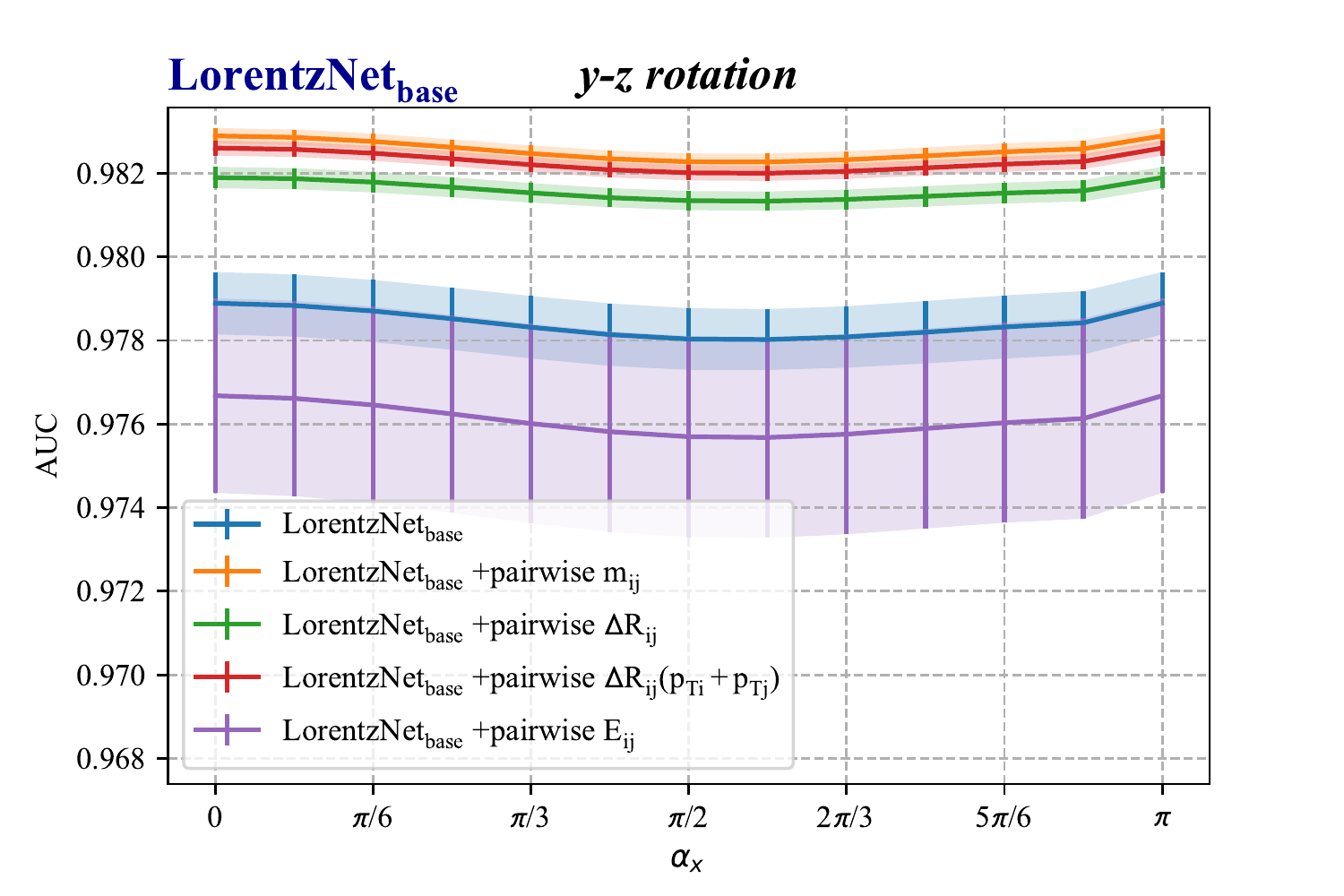}
\includegraphics[width=0.32\textwidth]{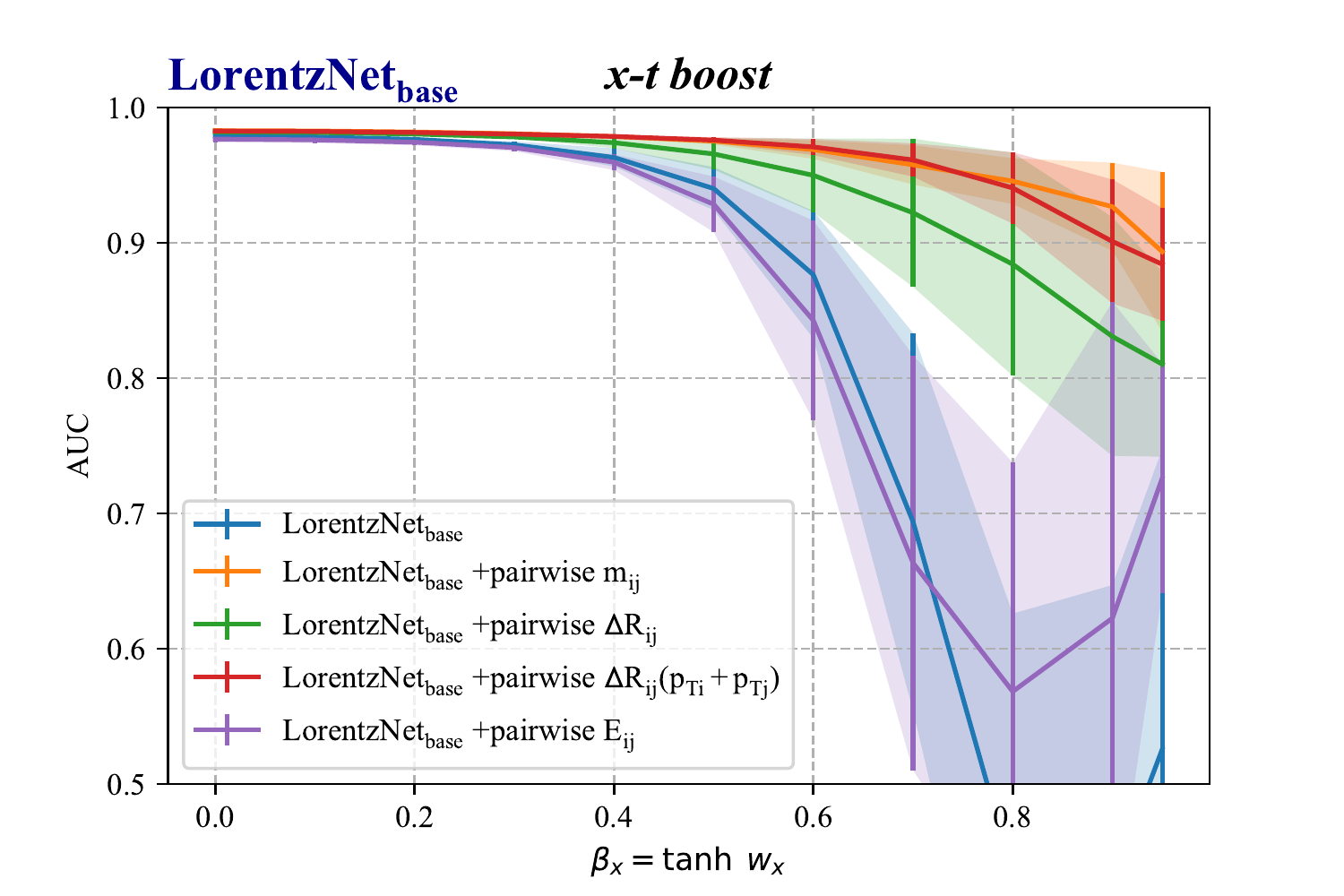}
\includegraphics[width=0.32\textwidth]{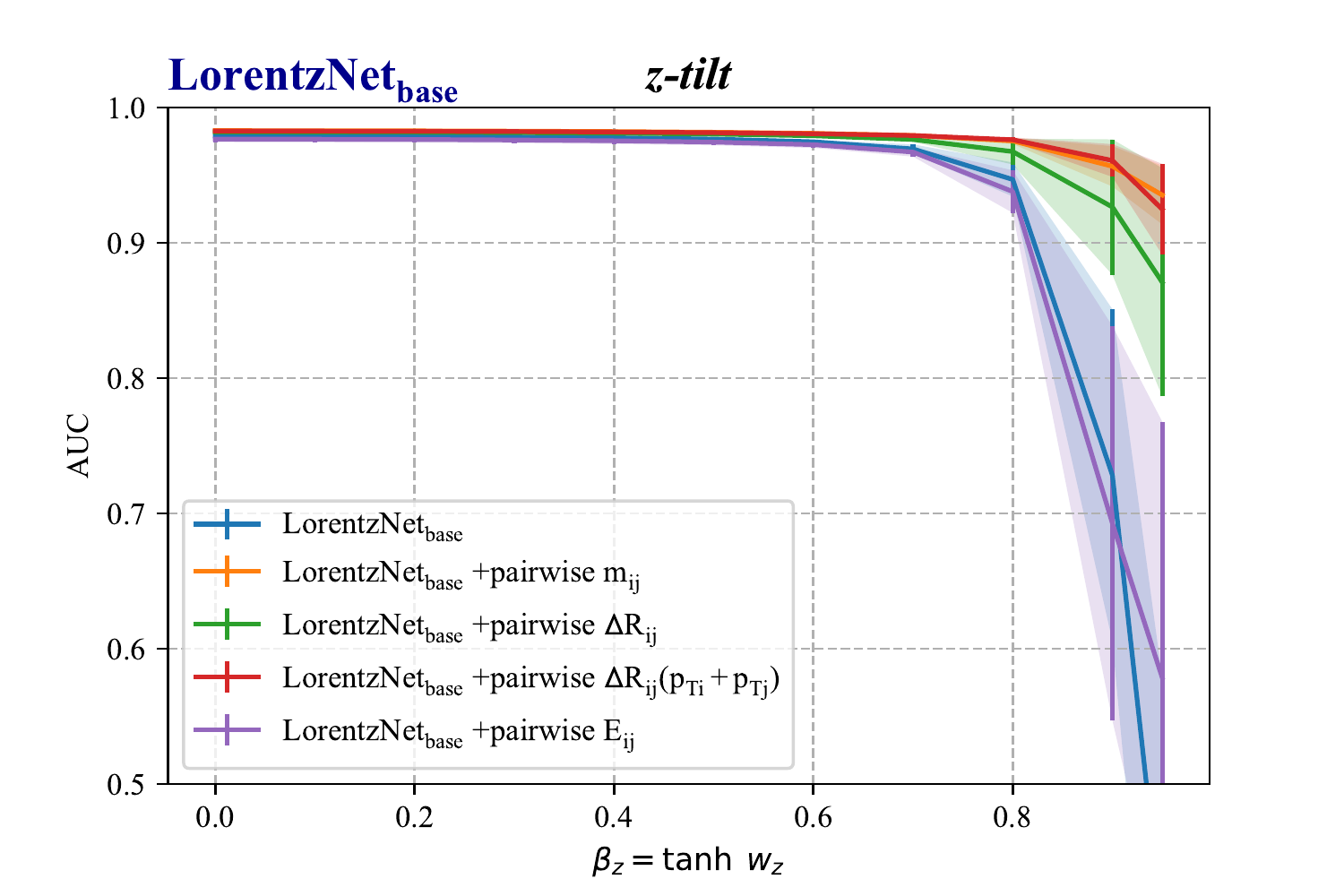} \\
\caption{Network performance in terms of AUC evaluated under various types of Lorentz transformation applied to the test dataset. From left to right: the input jet undergoes a \yz rotation with angle $\alpha_x$,  an \xt boost with rapidity $w_x$, and a $z$-tilt transformation (i.e., a \zt boost with rapidity $w_z$ followed by a \xz rotation to redirect the jet to the $x$ axis). The curves in the plots show different options of pairwise features added to the baseline network. The baseline is chosen as ParticleNet (top) or \lorentznetbase (bottom). The error bar shows the standard deviation over ten trainings.}
\vskip 0.2in
\label{fig:lorentz-attk-pair}
\end{center}
\end{figure*}

We summarize new findings from the plots as follows.
\begin{itemize}
\item The first obvious finding is that, when the added patch structure is invariant regarding a certain symmetry, the whole network tends to be more resilient to the transformations on that symmetry. Specifically, the network adding the ``mass'' recipe becomes more robust for all transformation scenarios; the network adding the patch incorporating $\Delta R_{ij}$ has improvement on adaptability for \yz rotations; and interestingly, adding $\Delta R_{ij}(p_{{\rm T},i} + p_{{\rm T},j})$ improves adaptability for both \yz rotations and \xt boost. This suggests that the added patch plays a significant role during network training, as its invariant properties are, to some extent, imparted to the entire network.
\item Networks adding the symmetry-preserving structure show smaller spreads on the metric over ten trainings. Especially, the ablation case that adds $E_{ij}$ shows unstable training results. This can be attributed to the patch structure disrupting the more fundamental symmetry associated with \zt boost. Overall, they indicate that networks with higher levels of symmetry preservation exhibit a stronger generalization ability.
\end{itemize}

The above observations can be interpreted by recognizing that preserving Lorentz symmetry acts as a special ``inductive bias'' for the jet tagging tasks. Generally, the inductive bias works in the principle as follows. By introducing such a patch network structure that remains invariant under any Lorentz transformation, we effectively provide a hint to our network that the input jet property (e.g., its truth label) typically does not change when the input jet undergoes any Lorentz transformation. An analog example is the benefit of the CNN architecture in the vision domain (such as in the image classification task), where the specialized design of CNNs can provide a hint that the input image properties are generally unaffected by shifts of some elements within the image.

From this viewpoint, the two observations can be explained as follows. Our first observation, i.e., greater resilience to the transformations with the introduction of more symmetry levels, can be seen as evidence that the network benefits from this inductive bias.
This benefit arises as the network relies on the symmetry-preserving property of the patch structure and eventually propagates it throughout its entire structure.
The second finding, on the other hand, can be understood by acknowledging that integrating an inductive bias into the network effectively serves as a method of augmenting input data.

As a further validation that the symmetry-preserving property serves as an inductive bias, Fig.~\ref{fig:datasize-pair} shows the original top tagging performance in terms of the AUC when the network is trained on different sample sizes, ranging from 6000, $12\,000$, and $60\,000$ jets.
Clearly, the modified network that preserves more levels of symmetries performs better in the low-data scheme.
As incorporating inductive bias generally helps networks perform better on small samples due to its effective data augmentation, this is again in line with our observation.
Figure~\ref{fig:datasize-pair} also shows that, when the data size rises, the top tagging performance on this dataset tends to converge. We believe that it is caused by the performance saturation for this benchmark task when training with larger data.
To verify that the inductive bias is not confined to specific training sizes but is a general property to enhance the network performance, we conduct an additional experiment using the JetClass dataset. This experiment employs a more intricate ten-class classification to assess the impact of inductive bias on a $100\times 10^{6}$ dataset. Figure~\ref{fig:datasize-pair-jetclass} shows the jet tagging performance on JetClass, measured in terms of the multiclass classification AUC, as the training datasets range from $60\,000$ to $100\times 10^{6}$. The study only utilizes ParticleNet baseline and its two variants: one adding the patch structure with $m_{ij}$ as input to preserve full Lorentz symmetries and the other an ablation case with $E_{ij}$, which disrupts an existing symmetry related to the \zt boost. The results consistently support our conclusion, even with large datasets, while signs of performance saturation convergence appear less pronounced.

\begin{figure*}[tb]
\begin{center}
\includegraphics[width=0.40\textwidth]{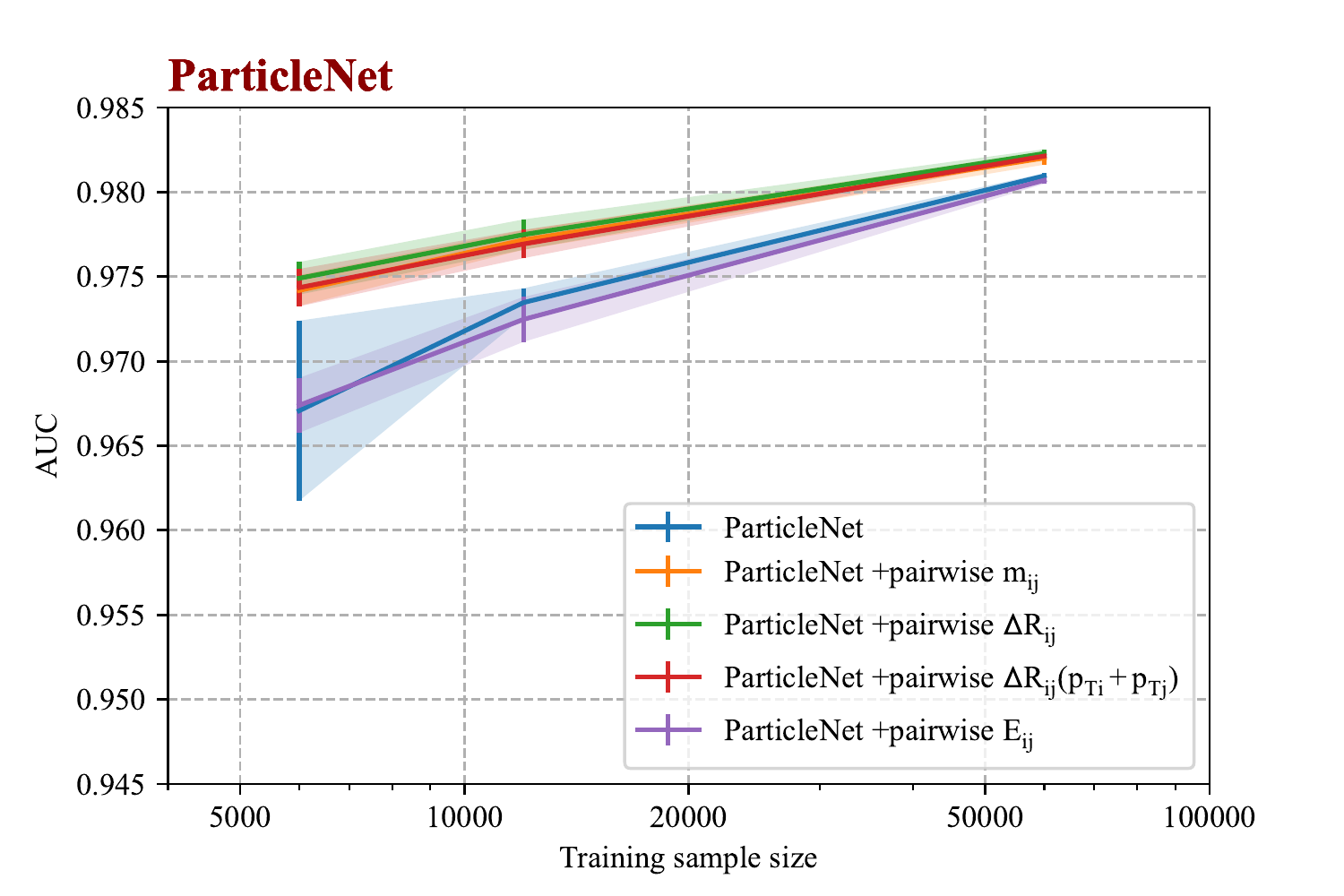}
\includegraphics[width=0.40\textwidth]{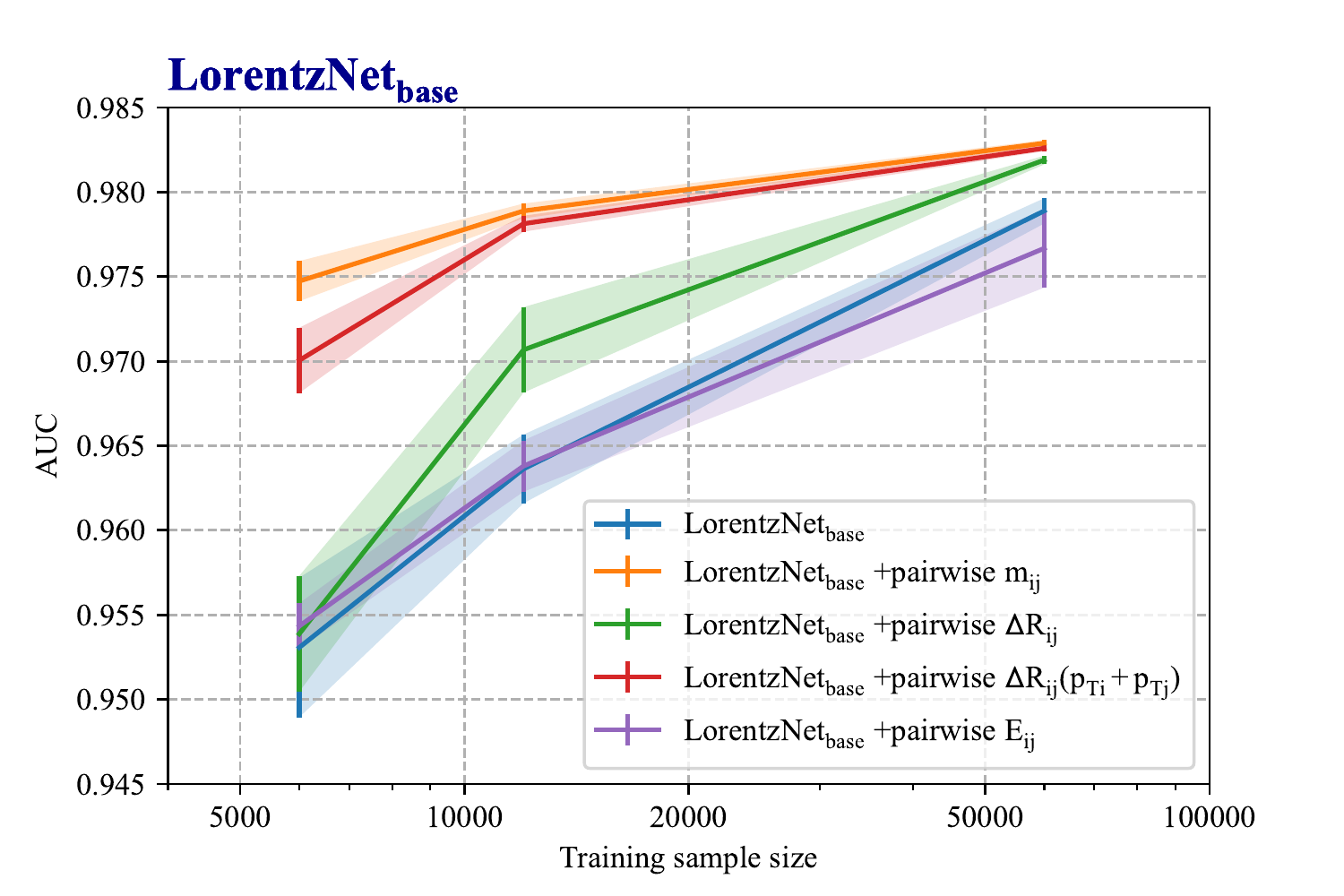} \\
\caption{Network performance in the top tagging task in terms of AUC versus the training size, selected among \{6000, $20\,000$, $60\,000$\}. The curves in the plots show different options of pairwise features added to the baseline network. The baseline is chosen as ParticleNet (left) or \lorentznetbase (right). The error bar shows the standard deviation over ten trainings.}
\label{fig:datasize-pair}
\end{center}
\end{figure*}

\begin{figure}[tb]
\begin{center}
\includegraphics[width=0.40\textwidth]{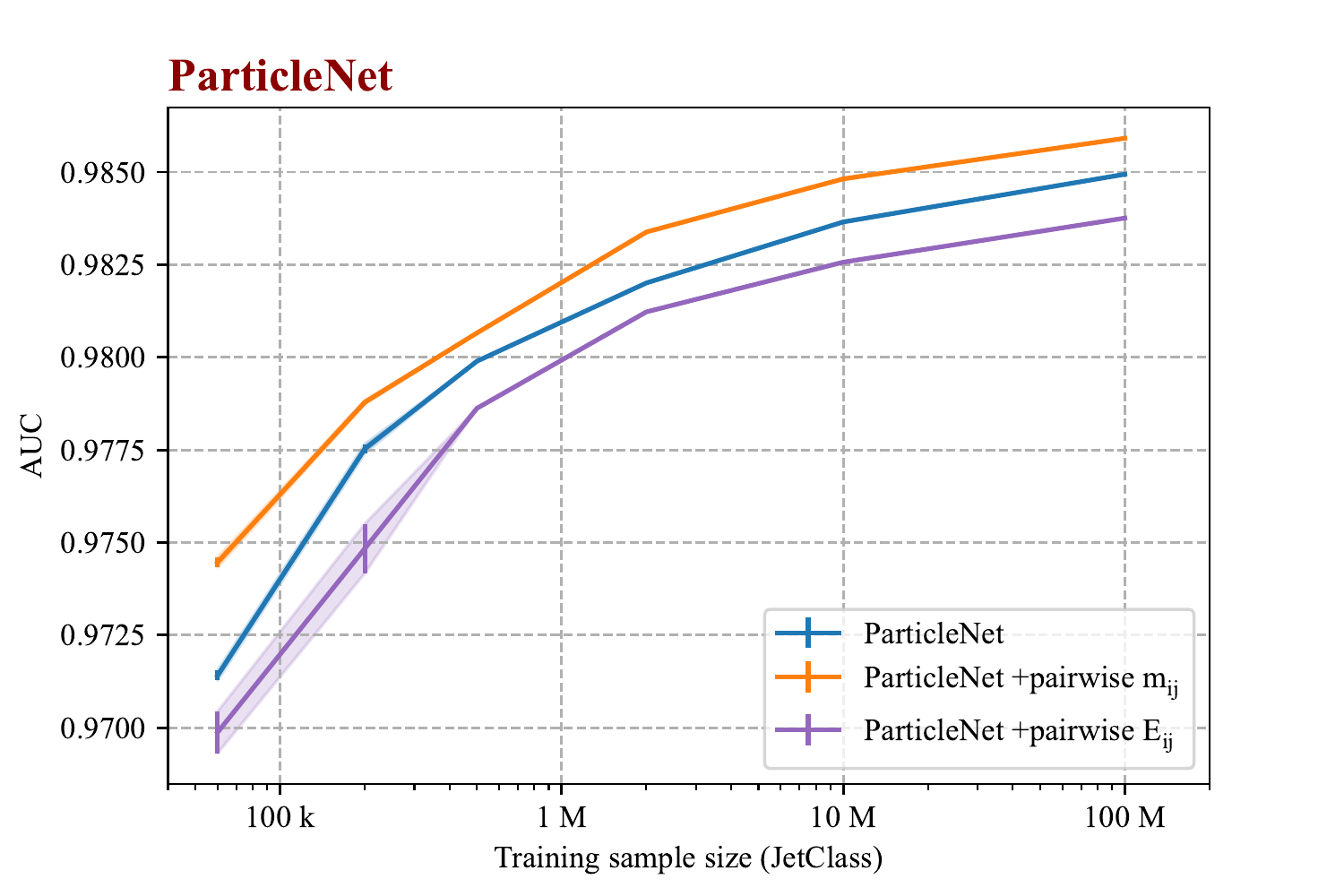} \\
\caption{Network performance in the JetClass classification task in terms of AUC versus the training size, selected among \{$60\,000$, $200\,000$, $500\,000$, $2\times 10^{6}$, $10\times 10^{6}$, $100\times 10^{6}$\}. The curves in the plots show different options of pairwise features added to the ParticleNet network as the baseline. The training iterations are conducted 10 times for the $60\,000$ case and 5 times for the $200\,000$ case, with the error bar showing the standard deviation over multiple trainings. Training with larger datasets is performed only once.}
\label{fig:datasize-pair-jetclass}
\end{center}
\end{figure}

Before ending this section, we offer a final remark on the pairwise mass, $m_{ij}$.
The above results also provide a new angle to explain the benefits brought by pairwise mass. As a matter of fact, the mass variable is generally considered an important nonkinematics feature that sculpts the dynamics properties of the physics system. This is generally used by experimentalists to explain that this variable can play a crucial role in the multivariate analysis, which other event kinematics features cannot compete with. We can, however, modify from a pure kinematics feature---separating angle $\Delta R$---from the hints of symmetries to achieve similar performance with the mass feature. This brings a new angle to interpret the role of pairwise mass participating in the network that boosts the performance. However, we need to point out that there are actually mathematical relations between $m_{ij}^2$ and $\Delta R_{ij}$. In the relativistic limit and considering $y_{y,z}\sim o(1)$ for the particles $i$, $j$, we can derive (with proof in Appendix~\ref{sec:proof})
\begin{equation}
m_{ij}^2 = |\mathbf{p}_i||\mathbf{p}_j|(1-\cos\,\theta_{(\mathbf{p}_i,\,\mathbf{p}_j)}) \approx \frac{1}{2}\Delta R_{ij}^2 p_{{\rm T},i}\,p_{{\rm T},j},
\end{equation}
which means the pairwise mass can be equivalently considered as another form of \pt-weighted angular separation feature between particles. Nevertheless, the fact that a pure mathematically constructed variable, $\Delta R_{ij}(p_{{\rm T},i} + p_{{\rm T},j})$, is able to rank a high performance as well and manifest expected behavior on imposing various types of transformations on the test dataset is sufficient to illustrate the role symmetry plays behind the network's mechanism.

\subsection{Incorporating nodewise features}

Our above study has used additional implementation of pairwise features to illustrate the role Lorentz-symmetric design plays in network training. However, the pairwise features have limited usage as they are generally applicable to GNN or attention-based baseline models only. Therefore, we consider further extending its application scheme and hope to design a more generalized patch. The new patch structure is based on additional nodewise features and can be applied to all mainstream networks that rely on the point-cloud (set) representation of the input data.

\subsubsection{Variables}

The node features are designed from the same spirit to incorporate mass variables, but carried in a nodewise manner instead of pairwise. For each node $i$, we define a group of friend nodes $G_i$, the choice of which is invariant under Lorentz transformations. We calculate their invariant mass
\begin{equation}\label{eq:mass-linear-comb}
m_{G_i} = \Big[\Big(\sum_{j\in G_i} p_j\Big)^\mu\Big(\sum_{j\in G_i} p_{j}\Big)_\mu \Big]^{\frac{1}{2}} = \Big[2\sum_{j,k\in G_i}^{j<k}p_j^\mu p_{k,\mu}\Big]^{\frac{1}{2}}.
\end{equation}
Therefore, it is essentially the predetermined linear combination of all Lorentz scalars $p_i^\mu p_{j,\mu}$. In the ablation study, the nodewise variable $E_{G_i} = \sum_{j\in G_i} E_j$ is also considered as an option.

We deliver studies in the determination of the $G_i$. We find that
\begin{equation}
G_i = \{ j \Big| p_{i}^\mu p_{j,\mu}\text{ is among the $k$ largest values for all $j$}\}
\end{equation}
is a Lorentz-invariant choice and makes the network more performant. Here, $k$ is a predetermined variable. We choose the value of $k$ as \{4, 8, 16, 32\}, hence creating the nodewise features with a dimension of 4.

\subsubsection{Patch structure}

The injection of new nodewise features to the baseline network is created as a rather generic design. Therefore, we use PFN, ParticleNet, and \lorentznetbase as our baseline for experiments and implement the same patch structure to all three networks. The patch is illustrated in Fig.~\ref{fig:patch} (b). First, $N$ nodewise features are calculated at the beginning stage, and embedded from the initial dimension 4 to the fixed feature dimension 64 by an elementwise MLP, via two hidden layers of feature dimension 64. The embedded nodewise feature is denoted by $\mathbf{u}_i$. We note that all mainstream networks viewing jets as a point cloud (set) are composed of a stack of some unit block to update the nodewise features of the particles.
For PFN, the unit block is $\Phi(x)$ according to the notation from Ref.~\cite{Komiske:2018cqr}. This represents a feed-forward network designed to individually update particle features.
For ParticleNet, it is the EdgeConv operation~\cite{Qu:2019gqs}; for LorentzNet, it is the Lorentz Group Equivariant Block~\cite{Gong:2022lye}. We need to incorporate our additional node feature $\mathbf{u}_i$ into the existing structure block by block. In the data processing flow, we update the node feature $\mathbf{x}_i$ which will be fed into the unit block by $\mathbf{u}_i$, after a dimension-matching linear layer. This can be expressed by
\begin{equation}
    \mathbf{x}'_{i} = \mathbf{x}_i + \text{\texttt{Linear}}(\mathbf{u}_i).
\end{equation}
We note the injection strategy is very similar to that of including pairwise features in ParticleNet, by comparing to Fig.~\ref{fig:patch} (a) and the injection formula described in Eq.~(\ref{eq:pair-conv2d}). Both methods employ an embedding of the additional features first and then inject them into the baseline network block by block. The difference is that our current features are on a per-node basis and are more generalized to be applied.

\subsubsection{Experiments}

We do the same experiments as detailed in Sec.\ref{sec:pair-exp} to study the effect when incorporating the additional node features via our generalized mechanism and understand its relation with symmetry preservation. Table~\ref{tab:node-perf} shows the performance of different schemes in terms of accuracy, the AUC, and background rejections. Figure~\ref{fig:lorentz-attk-node} shows the robustness study of network performance upon Lorentz transformations on the test dataset. Figure~\ref{fig:datasize-node} provides the performance trend when trained on various sample sizes. All uncertainties shown in the table and plots correspond to the standard deviation over ten trainings. The findings are, overall, similar to the pairwise case in Sec.\ref{sec:pair-exp}, and are summarized below.

\begin{table*}[tb]
\setlength{\tabcolsep}{10pt}
\renewcommand{\arraystretch}{1.1}
\caption{Performance of the baseline network and the one supplemented by the nodewise patch structure with different variable designs. The baseline network is chosen from PFN, ParticleNet, and \lorentznetbase. The model is trained on $60\,000$ jets from training data and evaluated on the full test data. The uncertainty is calculated from the standard deviation over ten trainings. For each metric, the best-performing networks from both ParticleNet and \lorentznetbase variants are highlighted in bold text.}
\label{tab:node-perf}
\begin{center}
\begin{tabular}{ll|cccc}
\hline\hline
Base model  & Variation & Accuracy   & AUC   & \makecell[c]{$1/\epsilon_{\rm B}$\\$(\epsilon_{\rm S}=50\%)$}   & \makecell[c]{$1/\epsilon_{\rm B}$\\$(\epsilon_{\rm S}=30\%)$}     \\
\hline
\multirow{3}{*}{PFN} &  ---    & $0.9104(12)$   & $0.9664(13)$  & $67\pm 5$   & $198\pm 21$   \\
    &  +nodewise: $m_{G_i}$   & $\bf 0.9281(4)$   & $\bf 0.9791(2)$  & $\bf 184\pm 5$   & $\bf 714\pm 50$   \\
    &  +nodewise: $E_{G_i}$   & $0.9207(4)$   & $0.9750(3)$  & $125\pm 3$   & $378\pm 19$   \\
\hline      
\multirow{3}{*}{ParticleNet} &  ---      & $0.9310(3)$   & $0.9810(2)$  & $198\pm 7$   & $640\pm 29$   \\
    &  +nodewise: $m_{G_i}$     & $\bf 0.9313(3)$   & $\bf 0.9812(1)$  & $\bf 222\pm 5$   & $\bf 800\pm 40$   \\
    &  +nodewise: $E_{G_i}$     & $0.9300(12)$   & $0.9802(6)$  & $183\pm 12$   & $572\pm 47$   \\
\hline
\multirow{3}{*}{LorentzNet\textsubscript{base}} &  ---      & $0.9276(12)$   & $0.9789(7)$  & $172\pm 13$   & $581\pm 53$   \\
    &  +nodewise: $m_{G_i}$      & $\bf 0.9306(3)$   & $\bf 0.9809(2)$  & $\bf 219\pm 3$   & $\bf 887\pm 36$   \\
    &  +nodewise: $E_{G_i}$      & $0.9272(3)$   & $0.9788(1)$  & $171\pm 2$   & $562\pm 16$   \\
\hline\hline
\end{tabular}
\end{center}
\end{table*}

\begin{figure*}[tb]
\begin{center}
\includegraphics[width=0.32\textwidth]{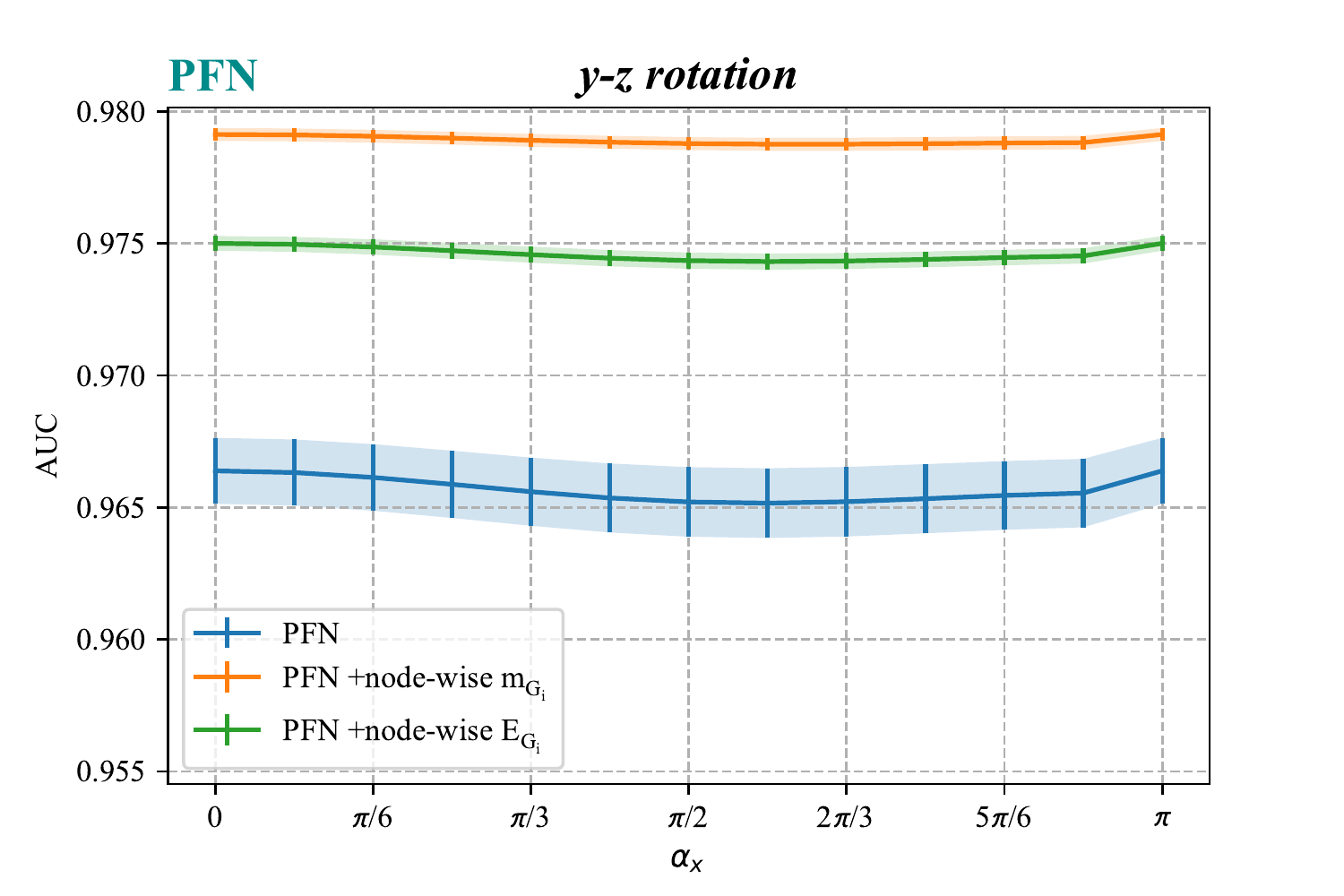}
\includegraphics[width=0.32\textwidth]{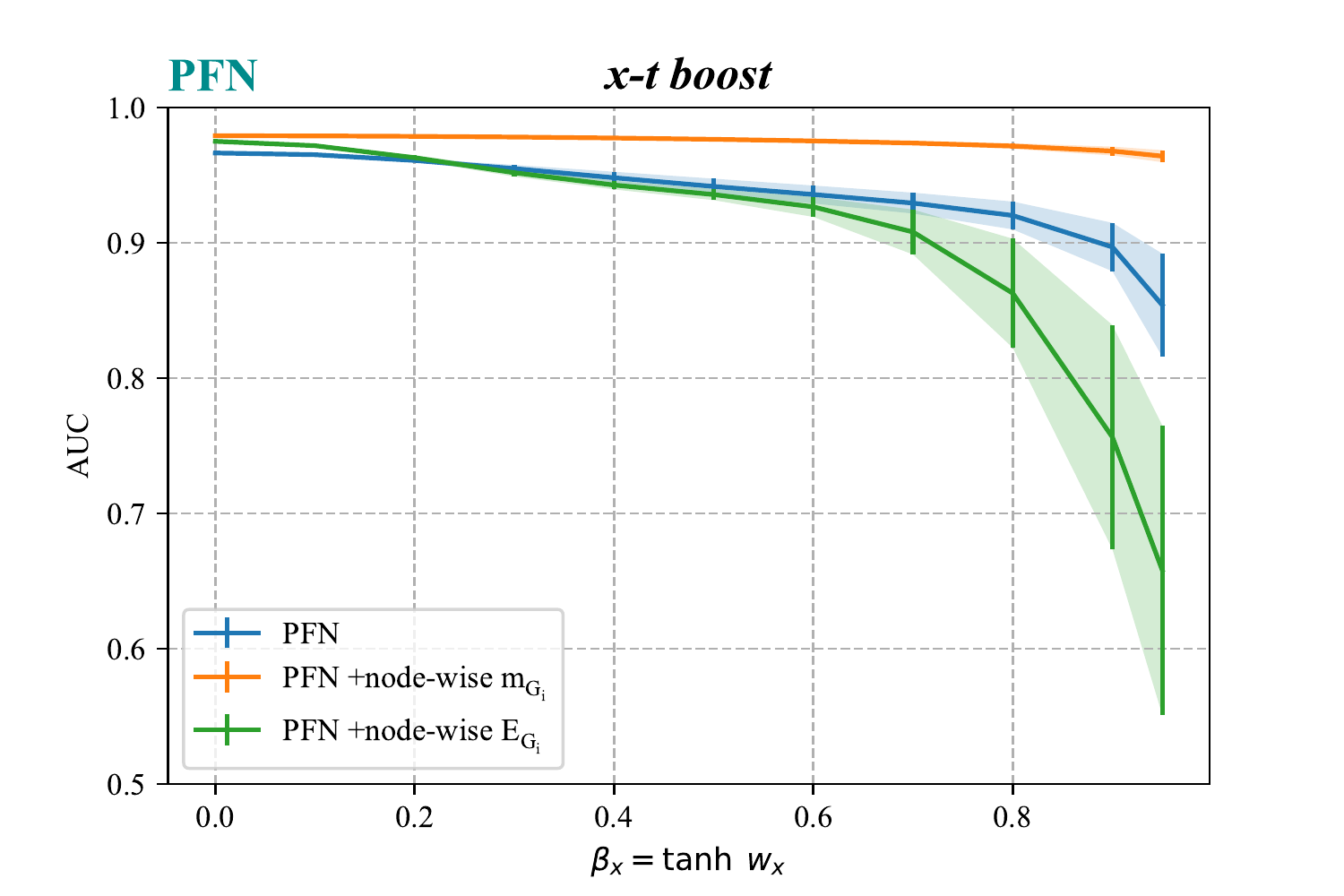}
\includegraphics[width=0.32\textwidth]{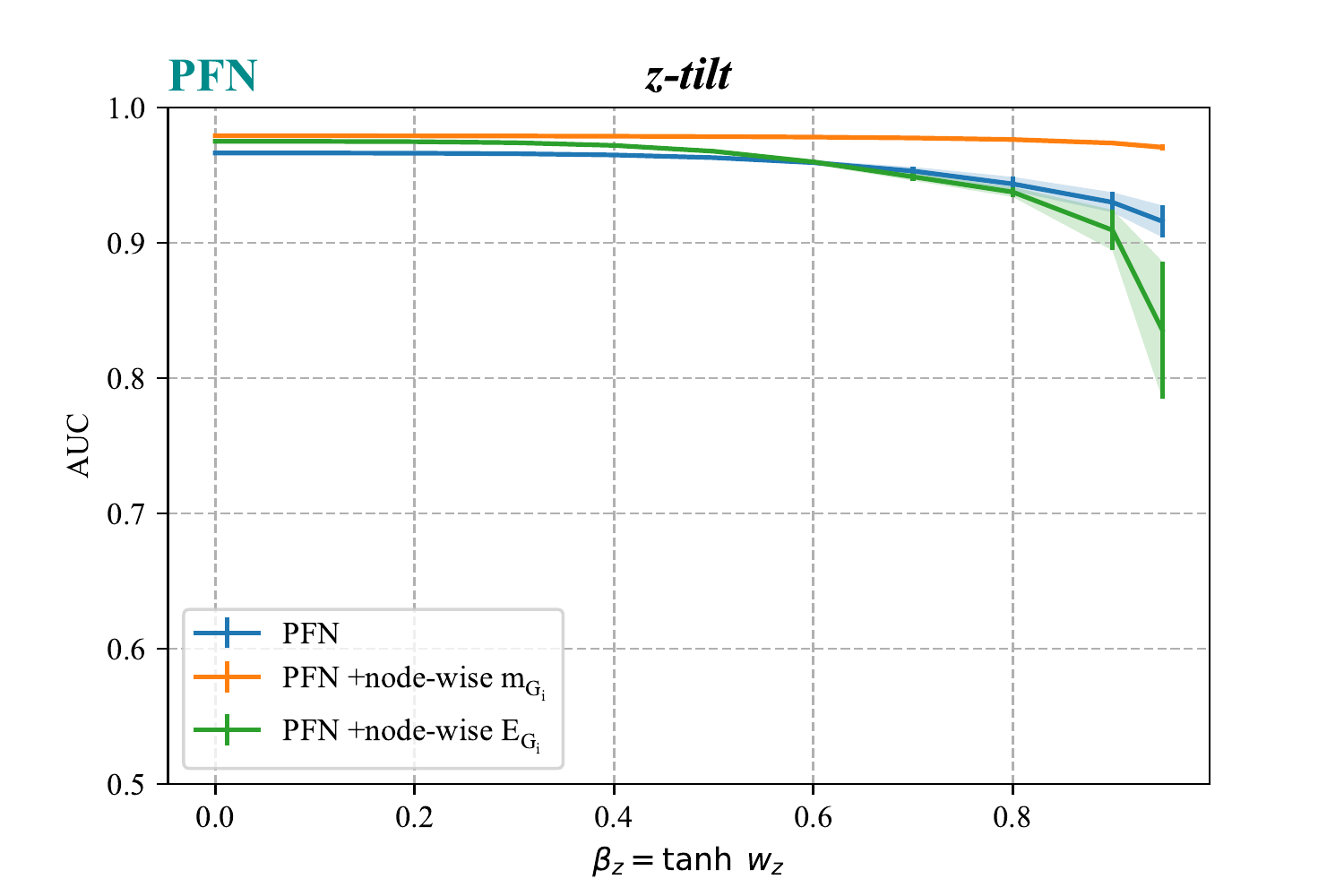} \\
\includegraphics[width=0.32\textwidth]{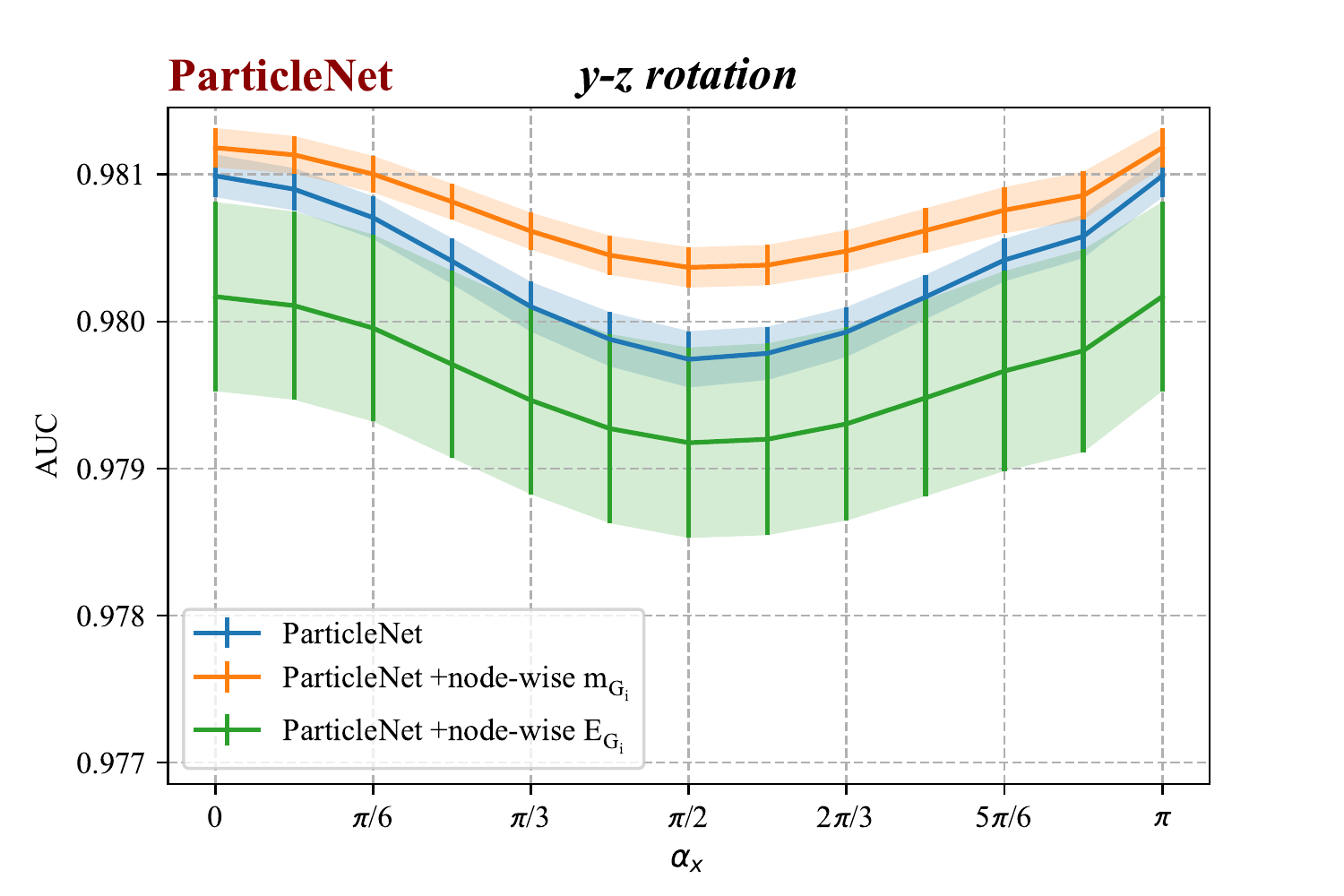}
\includegraphics[width=0.32\textwidth]{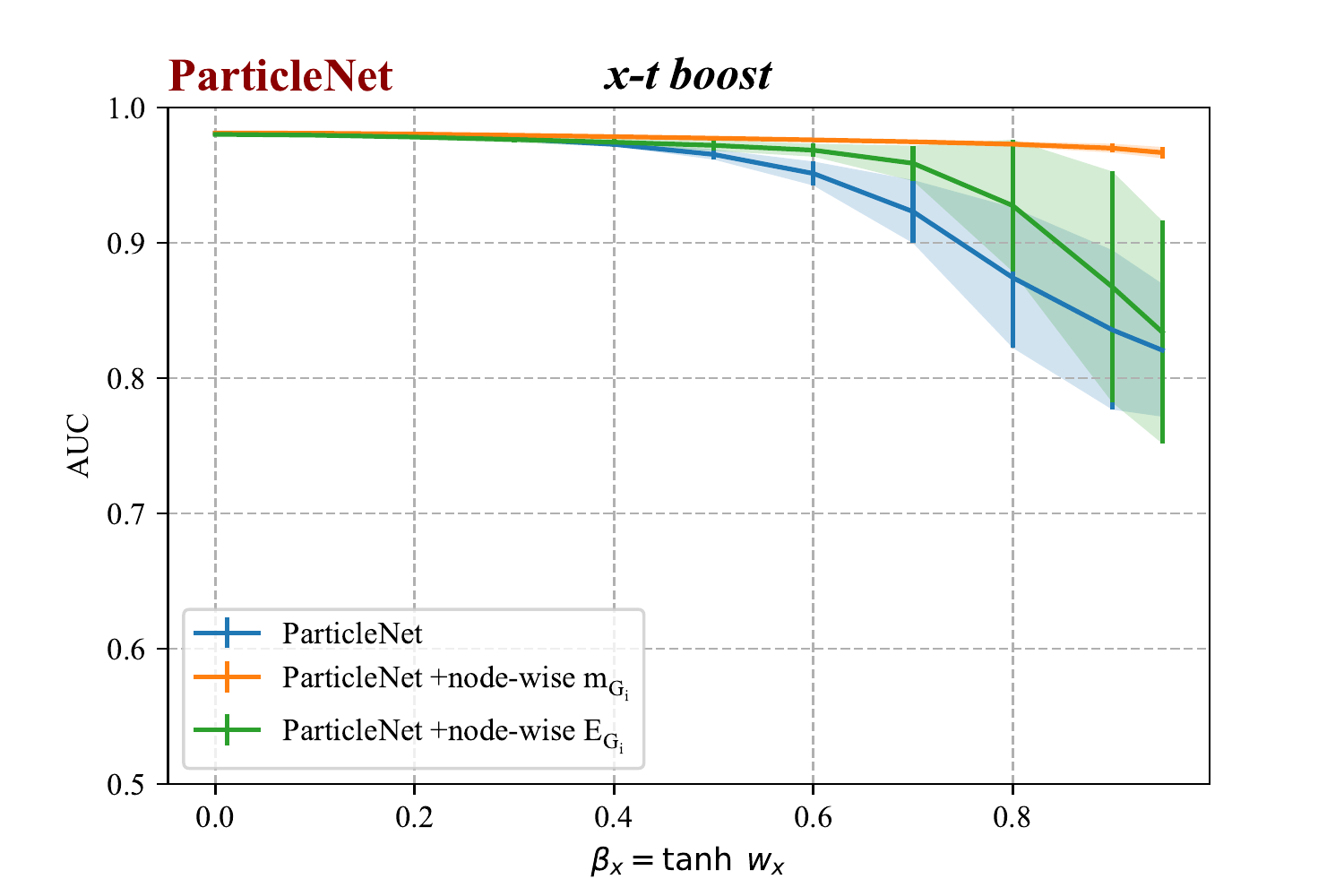}
\includegraphics[width=0.32\textwidth]{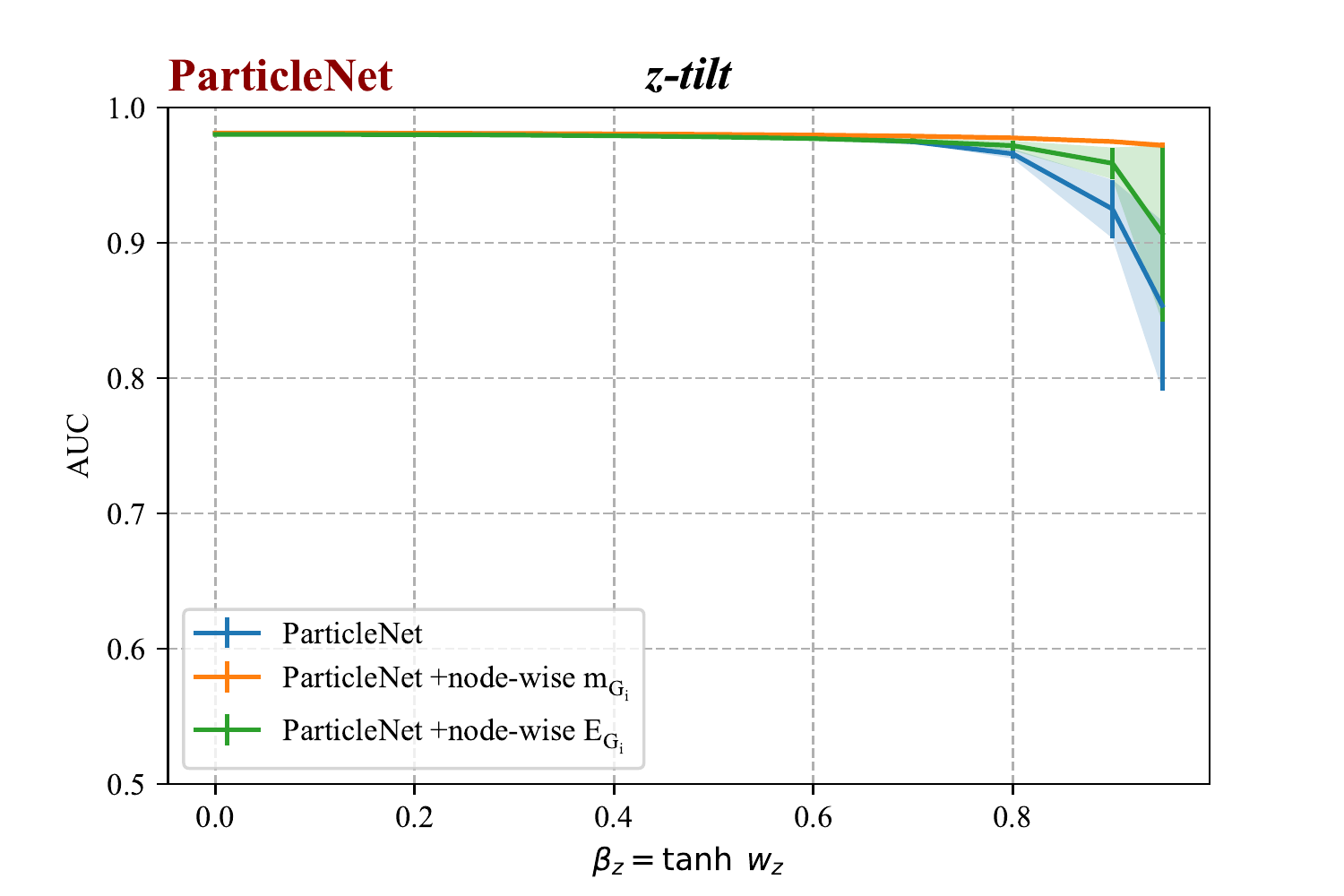} \\
\includegraphics[width=0.32\textwidth]{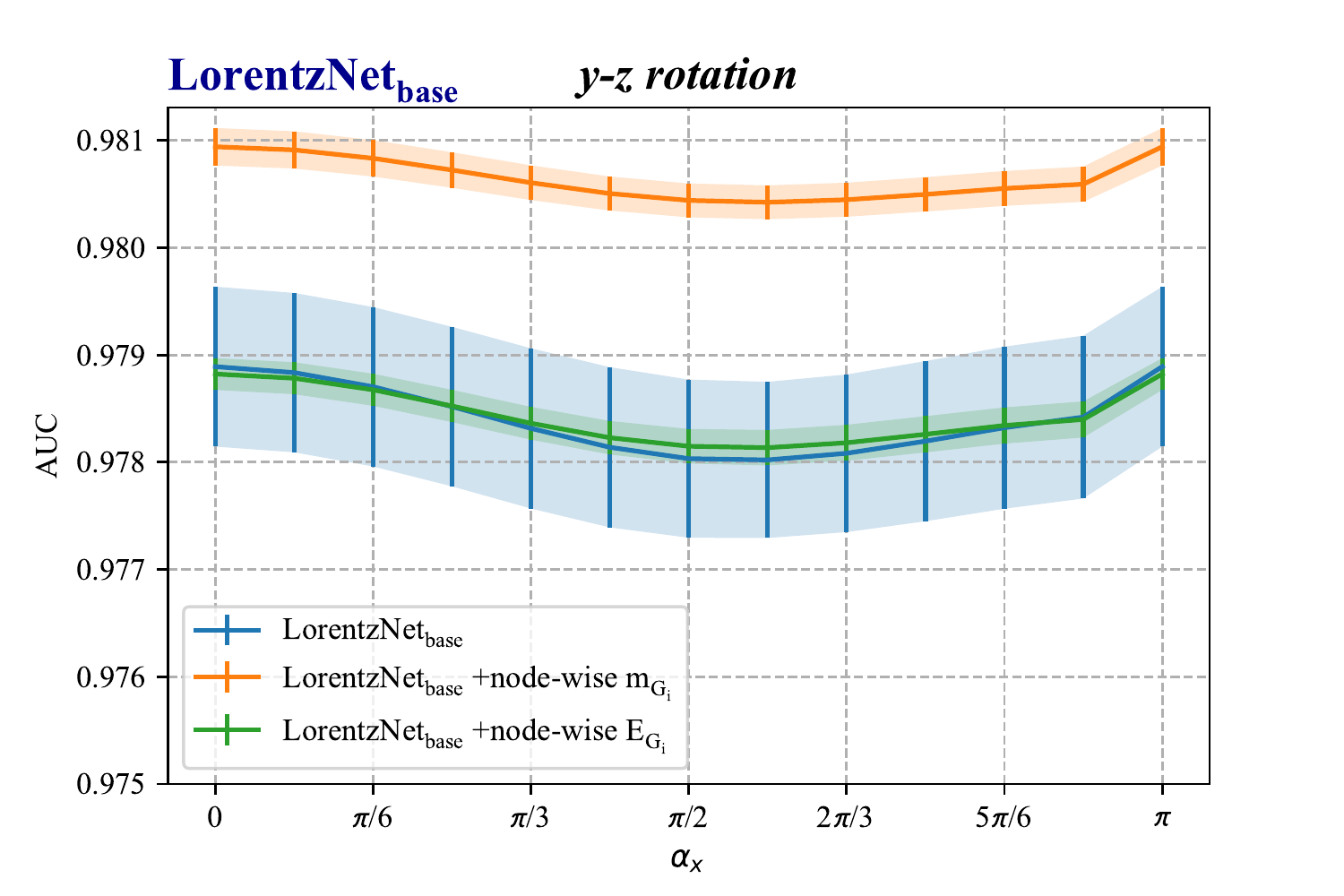}
\includegraphics[width=0.32\textwidth]{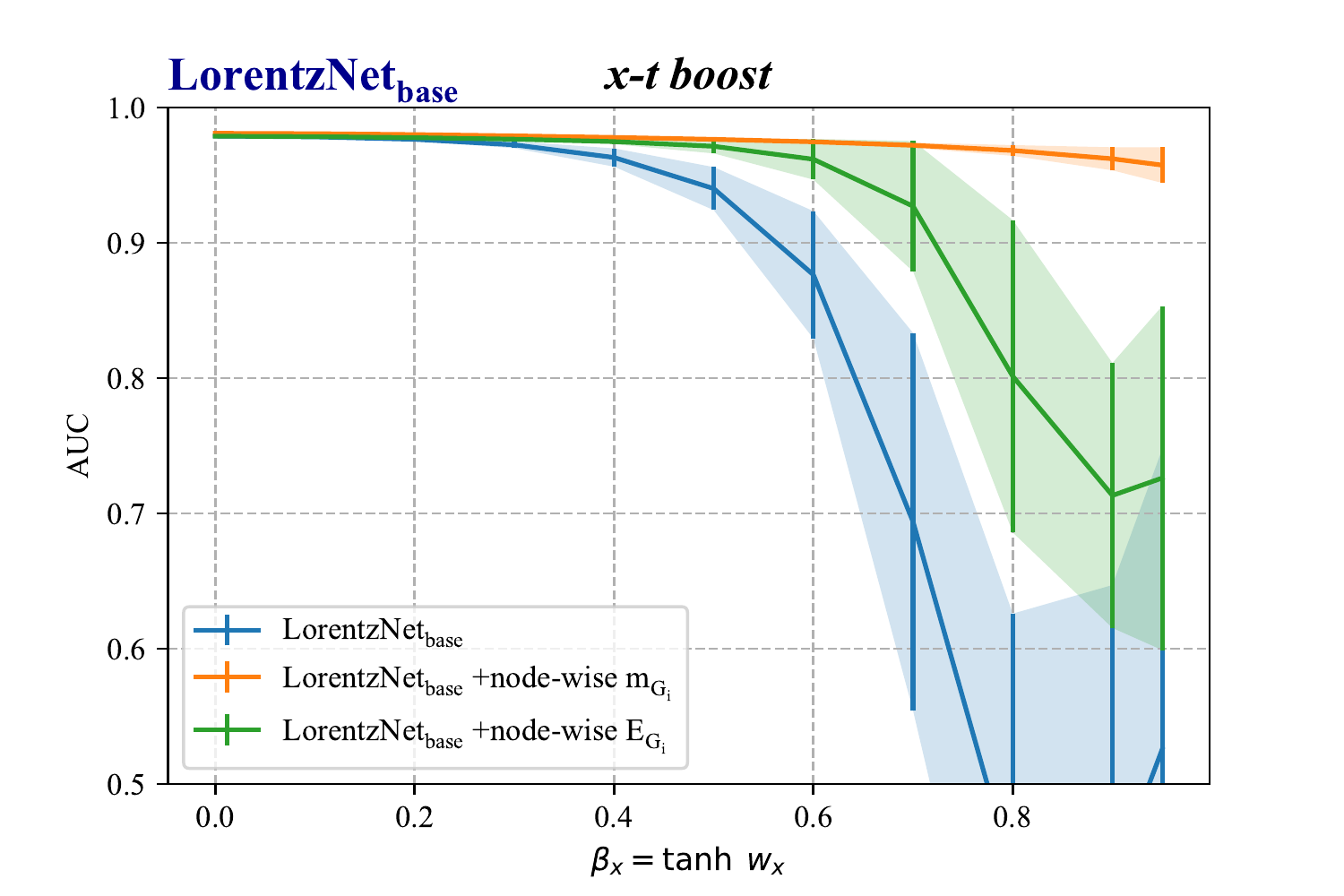}
\includegraphics[width=0.32\textwidth]{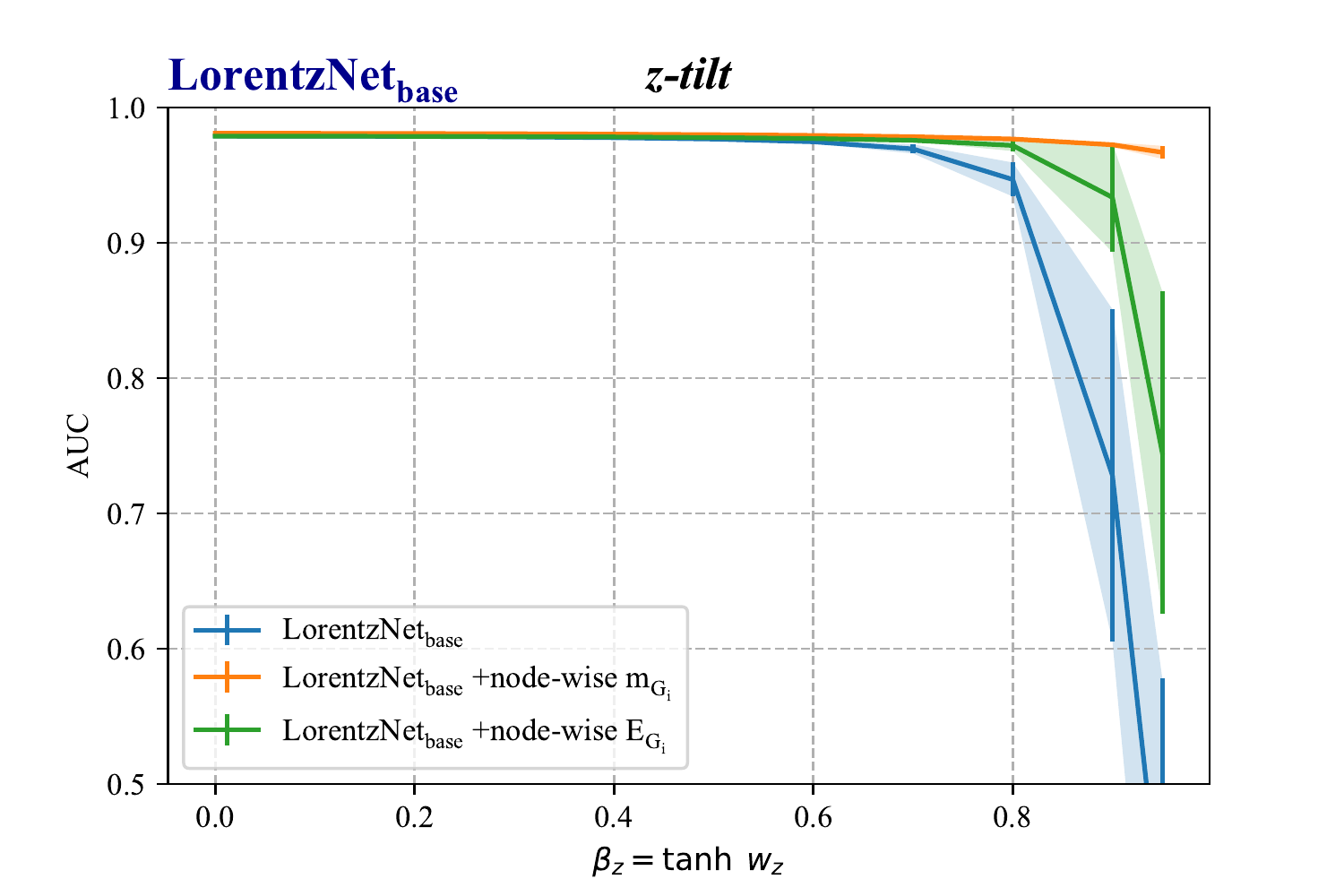} \\
\caption{Network performance in terms of AUC evaluated under various types of Lorentz transformation applied to the test dataset. From left to right: the input jet undergoes a \yz rotation with angle $\alpha_x$,  an \xt boost with rapidity $w_x$, and a $z$-tilt transformation (i.e., a \zt boost with rapidity $w_z$ followed by a \xz rotation to redirect the jet to the $x$ axis). The curves in the plots show different options of nodewise features added to the baseline network. The baselines are chosen as PFN (top), ParticleNet (middle), or \lorentznetbase (bottom). The error bar shows the standard deviation over ten trainings.}
\label{fig:lorentz-attk-node}
\end{center}
\end{figure*}

\begin{figure*}[tb]
\begin{center}
\includegraphics[width=0.32\textwidth]{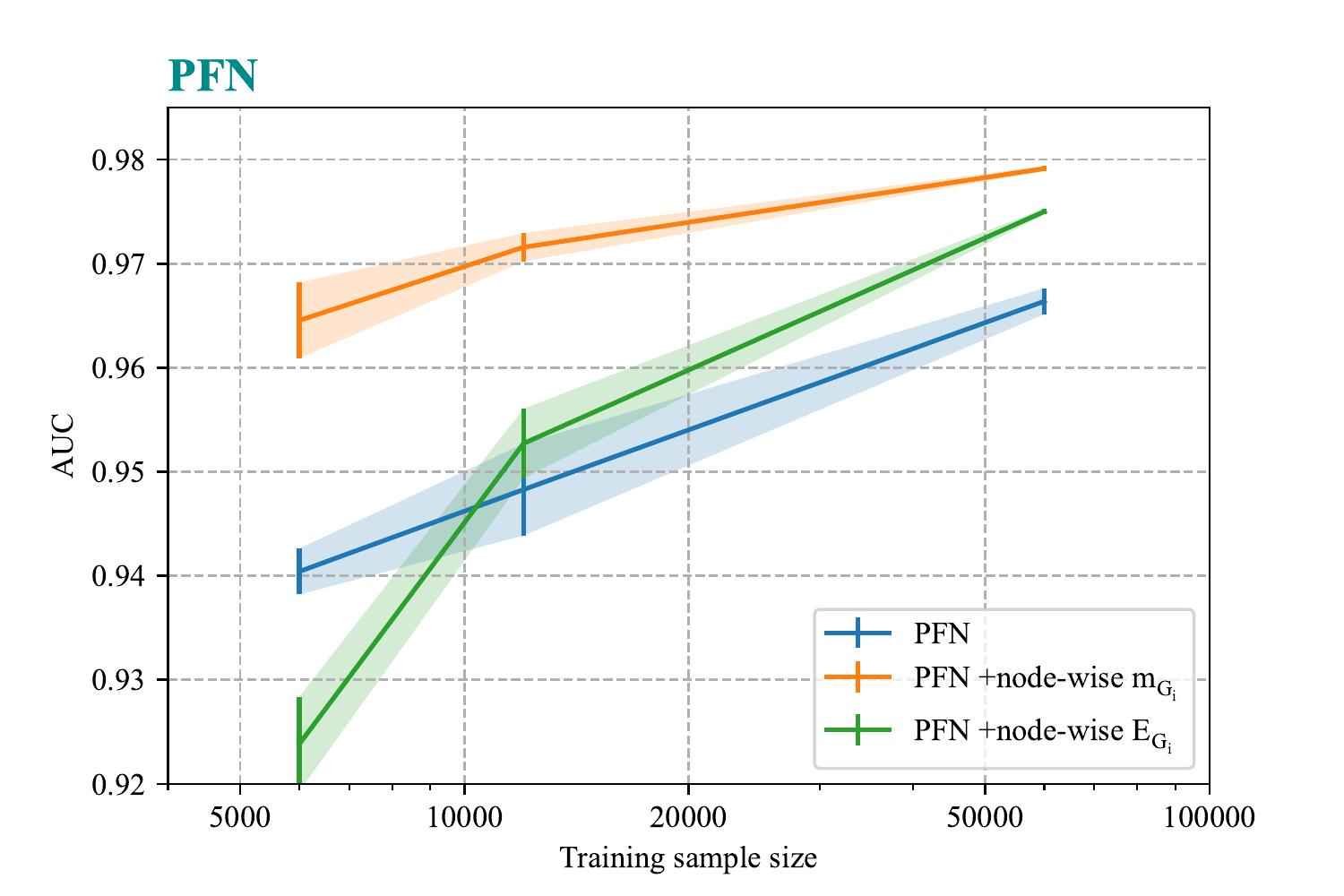}
\includegraphics[width=0.32\textwidth]{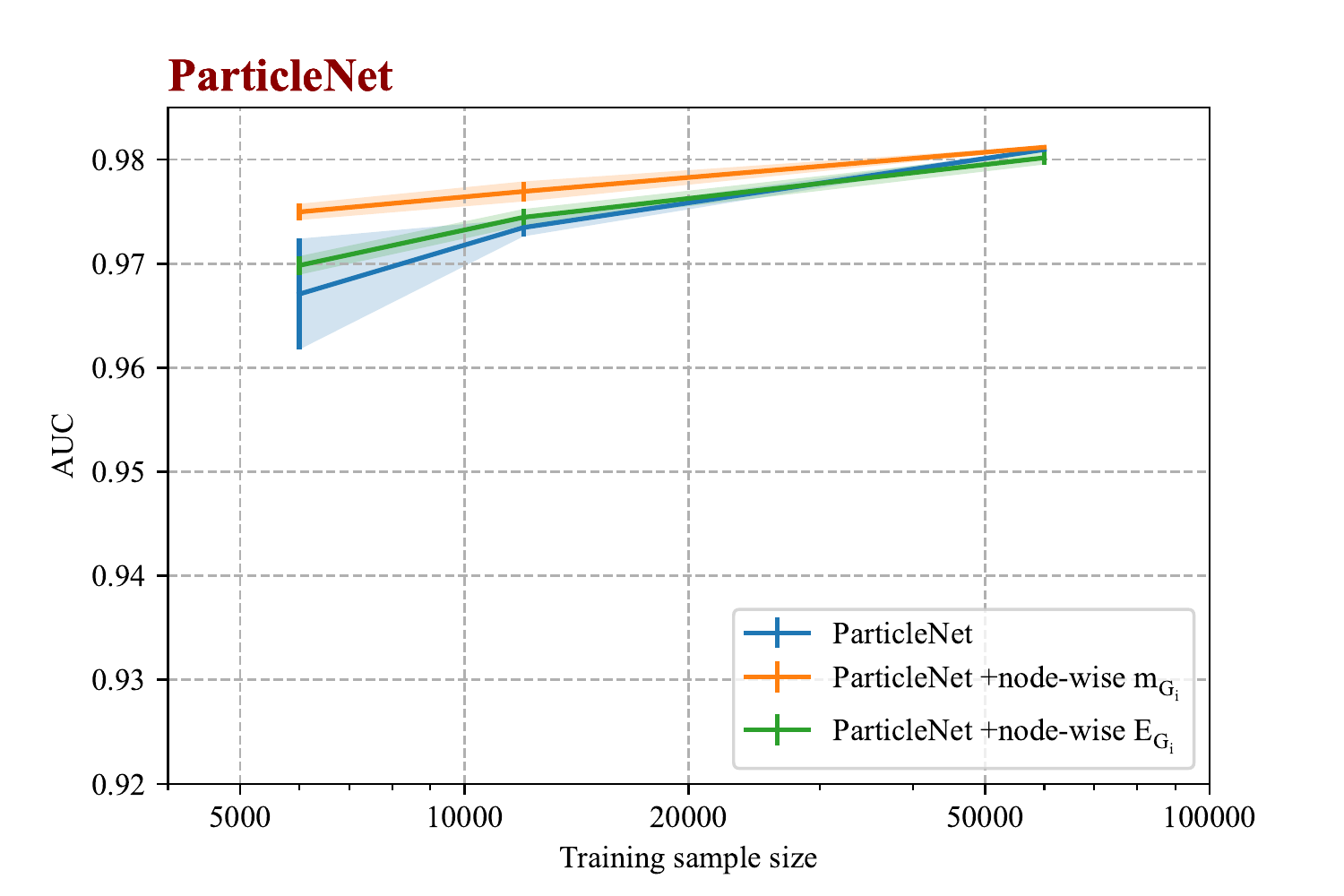}
\includegraphics[width=0.32\textwidth]{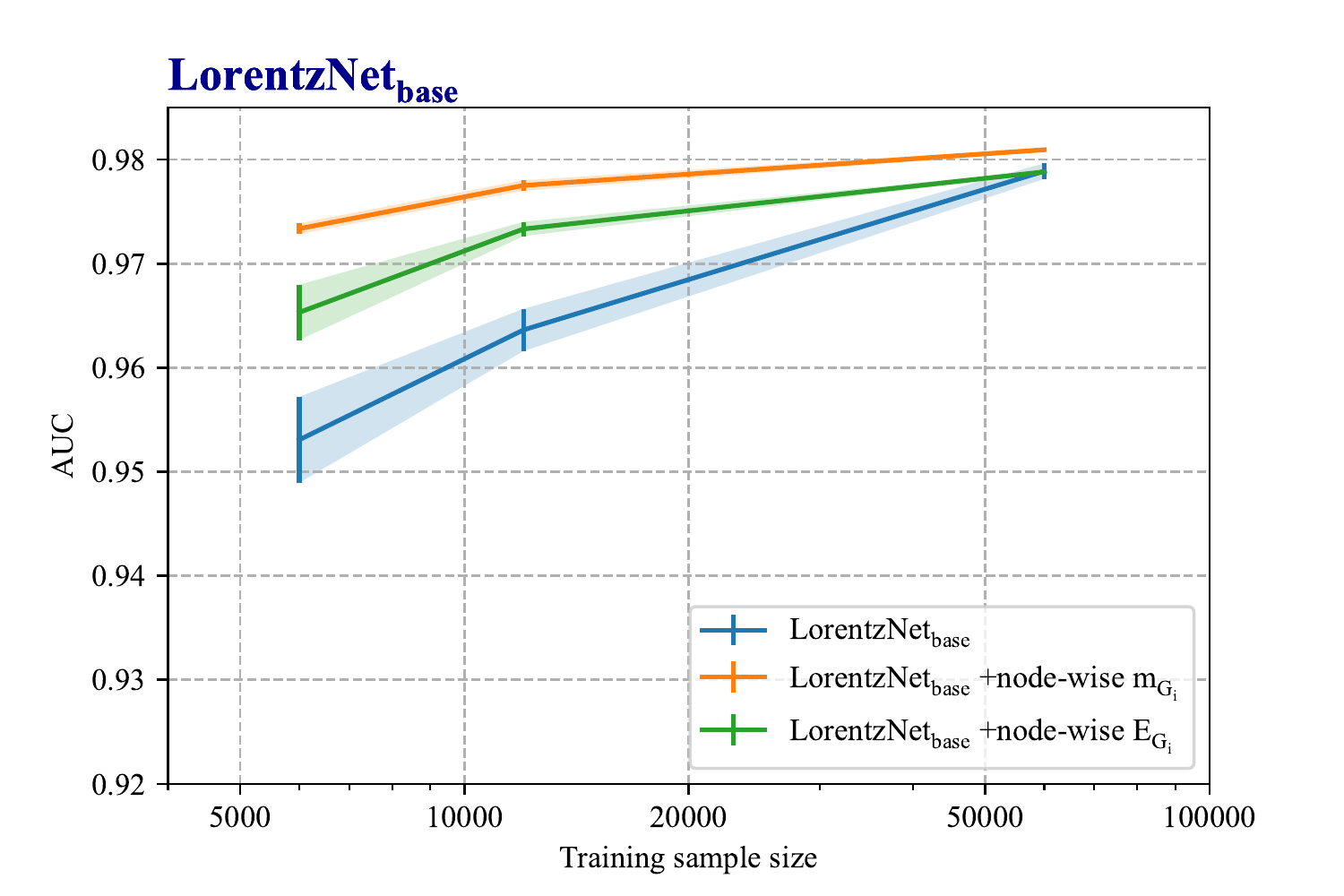} \\
\caption{Network performance in terms of AUC versus the training size. The curves in the plots show different options of nodewise features added to the baseline network. The baselines are chosen as PFN (left), ParticleNet (middle), or \lorentznetbase (right). The error bar shows the standard deviation over ten trainings.}
\label{fig:datasize-node}
\end{center}
\end{figure*}

\begin{itemize}
\item Inclusion of nodewise mass features substantially improves the PFN performance. This demonstrates the huge potential of including manually constructed mass features in improving DNN performance in the era of using low-level inputs.
\item All experiments show a degree of improvement when incorporating mass features. The improvement is relatively small in ParticleNet and \lorentznetbase due to the effectiveness of their plain GNN-based network. However the gain still illustrates that the added Lorentz-symmetry-preserving patch helps improve the network performance.
\item In the case of ParticleNet and \lorentznetbase, the improvement from injecting the nodewise mass features is not as large compared with adding pairwise mass features shown in Table~\ref{tab:pair-perf}. This can be explained from the perspective that, in the case of including node-wide features calculated by Eq.~(\ref{eq:mass-linear-comb}), not all $N(N-1)/2$ Lorentz-invariant features $p_{i}^\mu p_{j,\mu}$ $(\forall\, i,\,j)$ are fed into the network, but only $N$ features composed of their linear combination are taken as the input. In principle, they carry only a part of the information.
\item The behavior under Lorentz boosts and rotations of test dataset and performance trend in using different sample sizes in Figs.~\ref{fig:lorentz-attk-node} and \ref{fig:datasize-node} follows our expectation. This reinforces our conclusion that preserving Lorentz symmetry serves as an inductive bias, a principle that is also applicable in this context.
\end{itemize}

\vspace{8pt}

Finally, Table~\ref{tab:complex} shows the comparison of the model complexity for baseline networks and their variants which incorporate additional features. As can be seen, the effect of the patch for including nodewise features between three baselines is consistent due to the generality of the patch design; the effect of adding pairwise features is rather different for ParticleNet and \lorentznetbase because their patches rely on different mechanisms, as introduced in Sec.~\ref{sec:pair-patch-design}. It is clear from the table that all of our introduced patch structures contain very few parameters compared to the original baselines. It makes the fact even more interesting that the Lorentz invariance property of a very small subnetwork can be successfully reflected onto the entire network. Thus, this finding provides a new angle to argue the important role Lorentz symmetry plays in network design.

\begin{table}[tb]
\setlength{\tabcolsep}{4pt}
\renewcommand{\arraystretch}{1.1}
\caption{The number of trainable parameters and floating point operations (FLOPs) for the three baseline networks and their variants. The ``$+$'' sign indicates the increase in the number with respect to its baseline.}
\label{tab:complex}
\begin{center}
\begin{tabular}{ll|cc}
\hline\hline
Base model  & Variation & \makecell[c]{No. parameters\\($\times 10^{3}$)}   & \makecell[c]{FLOPs\\($\times 10^{6}$)}   \\
\hline
\multirow{2}{*}{PFN} &  ---    & 83.84 & 4.46 \\
    &  +nodewise  & $+26.19$ & $+3.41$ \\
\hline      
\multirow{3}{*}{ParticleNet} &  ---      & 366.16 & 535.73  \\
    &  +pairwise  & $+34.91$ & $+285.29$ \\
    &  +nodewise & $+21.97$ & $+2.83$  \\
\hline
\multirow{3}{*}{LorentzNet\textsubscript{base}} &  ---      &  226.23 & 1997.69 \\
    &  +pairwise  & $+0.43$ & $+7.02$ \\
    &  +nodewise  & $+37.35$ & $+4.8$  \\
\hline\hline
\end{tabular}
\end{center}
\end{table}

\section{Discussion and conclusion}\label{sec:conclusions}

In this work, we study the effect of Lorentz-symmetric design in network performance in a systematic way. We confirm that the answer to the initial question is \textit{yes}: the Lorentz-symmetric design can boost network performance in jet physics, according to our experiments in the context of jet tagging. 

We first find out that the network need not be designed to fully comply with Lorentz symmetry to get the performance boost---only including a substructure invariant to the Lorentz symmetry can ensure a higher performance. Then, inspired by this spirit, we design two patches that can be generally used to improve the network performance.

\begin{itemize}
\item First, the pairwise mass feature can be injected into a GNN-based model, e.g., ParticleNet and LorentzNet, in their intrinsically supported way to assist in building the edge features of the graph that participate in the message-passing mechanism.
\item Second, as a more universal solution, we propose the design of the ``nodewise mass'' feature, which is constructed by the invariant mass of various friend particles of a given particle, and propose a general patch structure for injecting the feature into the primary network structure block by block.
\end{itemize}

We conduct experiments on PFN, ParticleNet, and the weakened version of LorentzNet and see general improvements when incorporating these mass features in two different ways. We use Lorentz boosts and rotation experiments to illustrate that the underlying symmetry preservation plays a role in the network training to achieve higher performance. Especially, we design a specific experiment to introduce the patch network structure adhering to various levels of Lorentz subsymmetries. The results indicate improved performance with the incorporation of more symmetry levels. This finding demonstrates that respecting \textit{full} Lorentz symmetry is particularly beneficial, aiding the network in achieving higher performance. This goes beyond the more commonly held belief in our community that symmetries related to the boosts along the beamline (\zt boost) and azimuthal rotation (\xy rotation) are the primary ones to be integrated into the design of jet neural networks. We then find that injecting mass features in two ways improves the network performance, especially when trained on a small training sample. This further demonstrates that Lorentz symmetry preservation is an effective way to assist the network in achieving higher performance, hence a real but often overlooked ``inductive bias'' in the jet physics task.

From another perspective, this work makes a successful step forward in understanding the interpretability of neural networks, in terms of how the networks incorporate symmetries using the dedicated variables we inject into the network. We show to the community that the previously discovered pairwise mass features, which are capable of improving network performance, find their root in the incorporation of full Lorentz symmetry in the network's substructure to process these variables.

\section{Outlook}\label{sec:outlook}

This work reveals, in the context of jet tagging, that Lorentz symmetry is an inductive bias, which, by properly hinting to the network, can enhance the network performance. Hence, one of the primary goals of our work is to draw attention to such inductive biases in future jet network designs. In this work, we propose the nodewise mass recipe, which is more general and capable of being applied to a variety of networks; however, we also emphasize that, with the goal of achieving state-of-the-art performance, it is more necessary to utilize the pairwise mass feature, as it contains more abundant Lorentz invariance properties inside a jet, and to incorporate it with advanced baseline networks, which can be either GNNs or attention-based models like Transformers. We note that both LorentzNet~\cite{Gong:2022lye} and ParT~\cite{pmlr-v162-qu22b} have adopted the pairwise mass design. This also explains to some extent the high performance they have exhibited.

Beyond the jet tagging task, it is interesting to study the effect of applying the patch structures in other physics scenarios that treat jets as a point cloud (set) of particles, for instance, in the regression of jet properties~\cite{Qiu:2022xvr}, in the jet assignment tasks~\cite{Fenton:2020woz,Lee:2020qil,Badea:2022dzb}, and in the generation task of jets with use of a generative model~\cite{kansal2021particle}. Furthermore, tasks that process whole collision events instead of a single jet may also draw on such patches in the network design. A typical example includes using a variational autoencoder to identify anomalous events in the search for new physics~\cite{Ostdiek:2021bem}.

This perspective broadens to a potentially more promising viewpoint.
For deep learning tasks using more primitive data as input, e.g., the raw data collected in calorimeters, which deposit energies in the regular grid or the data from the tracker storing the hit information~\cite{Pata:2021oez,DeZoort:2021rbj}, a key fact remains, i.e., the essence of these data lies in the information of outgoing particles. Therefore, we conjecture that Lorentz symmetry is equally important for such tasks. Special designs of the Lorentz-symmetry-preserving network to adapt these sources of input can be an interesting field for future study.

In addition to the points discussed above, we would also like to address that, for a better understanding of the role that symmetry-preservation plays in the network performance, there are yet room and means. Regarding the systematic study of Lorentz-symmetric design, this work adopts a universal paradigm, focusing on a segment of the network (a patch structure) and ensuring its invariance under Lorentz symmetry or its subsymmetries. However, this approach does not include the equivariant case, as such design can be more specialized.
While integrating this case into our general study presents challenges, we think that the Lorentz-equivariant designs may still inspire the next generation of high-performing networks. Hence, we also emphasize the importance of these designs in future research.
Additionally, although the mass has manifested itself in our study as a symmetry-relevant feature intrinsic in jet physics, when we focus on the heavy resonance jet tagging task, mass is also a direct signature to distinguish a specific type of jets or subjets. It would be interesting to study the role of masses and their symmetry-preserving property in other scenarios, e.g., the jet flavor tagging task, where jets cannot be explicitly distinguished by the mass variable itself. This will be more helpful to understand the role of mass in the network.

\acknowledgements
The work of C.~L., S.~Q., and Q.~L. is supported by National Natural Science Foundation of China under Grants No.~12061141002 and No.~2075004. C.~L. and Q.~L. thank 	
Raghav Kansal, Cristina Mantilla Suarez, Javier Duarte, and Zhengyun You for the helpful discussions on various jet tagging aspects. H.~Q. is thankful for the discussion with Barry Dillon, Anja Butter, and Tilman Plehn. C.~L., H.~Q., and S.~Q. are thankful for the discussion with Alexandre De Moor, Loukas Gouskos, Raffaele Gerosa, Stephane Cooperstein, and 
Santeri Laurila. C.~L. is thankful for the support from Yannan Pan and the discussion with Ziyang Zhang and Jose M. Munoz.

\appendix

\bibliographystyle{apsrev4-1}
\bibliography{main}

\section{Supplemented proofs}\label{sec:proof}
This appendix provides proofs of several pairwise feature properties under Lorentz transformations which are discussed in the context.

\begin{proposition}~\label{prop:1}
In the limit of $y_{y,i},\,y_{z,i}\sim o(1)$ for all particles $i$, where $y_y = \arctanh (p_y/E)$ and $y_z = \arctanh (p_z/E)$, considering all particles as massless, the pairwise feature $p_{ij} = \Delta R_{ij} (p_{{\rm T},i} + p_{{\rm T},j})$ for particles $i$ and $j$ is an approximate invariance to the boost transformation on the $x$ axis.
\end{proposition}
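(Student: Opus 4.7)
The plan is to track how each factor in $\Delta R_{ij}(p_{{\rm T},i}+p_{{\rm T},j})$ transforms under an \xt boost of rapidity $w$, working to leading order in the small parameters $y_{y,i},y_{z,i}$, and to show that the factor $e^{w}$ picked up by $p_{{\rm T},i}+p_{{\rm T},j}$ cancels against the factor $e^{-w}$ picked up by $\Delta R_{ij}$.

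First, I would write the action of the \xt boost on a four-momentum: $E\mapsto E\cosh w + p_x\sinh w$ and $p_x\mapsto E\sinh w + p_x\cosh w$, while $p_y$ and $p_z$ are unchanged. The key input from the hypothesis is that the jet axis points along $x$ and each constituent satisfies $p_y,p_z\ll E$; combined with the masslessness assumption $p^\mu p_\mu = 0$, this gives $p_x = \sqrt{E^2 - p_y^2 - p_z^2}\approx E$ to leading order. Substituting, $E\mapsto E\,e^{w}$ and $p_x\mapsto E\,e^{w}$ at leading order.

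Next, I would handle the transverse momentum. Since $p_{\rm T}=\sqrt{p_x^2+p_y^2}\approx p_x$ in this regime, it inherits the same leading-order behaviour, $p_{{\rm T},i}\mapsto p_{{\rm T},i}\,e^{w}$, so $p_{{\rm T},i}+p_{{\rm T},j}\mapsto e^{w}(p_{{\rm T},i}+p_{{\rm T},j})$. Then I would turn to the angular separation. Using the small-rapidity expansions $\eta\approx p_z/E$ and $\phi\approx p_y/p_x$ from Eq.~(\ref{eq:etaphiapp}), substituting the boosted momenta---in which $p_z$ is unchanged while $E$ and $p_x$ both acquire a common factor $e^{w}$---gives $\eta\mapsto \eta\,e^{-w}$ and $\phi\mapsto \phi\,e^{-w}$. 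Therefore $\Delta\eta_{ij}$, $\Delta\phi_{ij}$, and hence $\Delta R_{ij}$, all scale by $e^{-w}$. Multiplying the two scalings yields invariance of $\Delta R_{ij}(p_{{\rm T},i}+p_{{\rm T},j})$ at leading order, establishing the claim.

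The main obstacle is careful bookkeeping of approximation orders: verifying that the neglected subleading terms in $p_{y,z}/E$ do not accumulate into an uncancelled $w$-dependence at the order claimed, and confirming that the linearisations of $\eta$ and $\phi$ remain valid after the boost. The latter point is in fact automatic for $w>0$, because the boost rescales $E$ and $p_x$ by the same factor while leaving $p_{y,z}$ fixed, so the small ratios $p_{y,z}/E$ only shrink further, preserving the expansion; a mild restriction on how negative $w$ can be may however need to be noted so that the expansion remains uniformly valid across all constituents.
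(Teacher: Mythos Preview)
Your proposal is correct and follows essentially the same route as the paper's proof: both linearise $\eta$ and $\phi$ in the small-rapidity regime, observe that under an \xt boost $E$ and $p_x$ each acquire a factor $e^{w}$ while $p_y,p_z$ are fixed, and then read off $p_{\rm T}\mapsto e^{w}p_{\rm T}$ and $\Delta R_{ij}\mapsto e^{-w}\Delta R_{ij}$, whose product is invariant at leading order. The only cosmetic difference is that the paper phrases the angular step through the auxiliary rapidities $y_y,y_z$ rather than directly through $\eta,\phi$; your added remark on the validity of the expansion for negative $w$ is a nice observation not made explicit in the paper.
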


\begin{proof}
In the massless case and under the limit of $y_{y,i},\,y_{z,i}\sim o(1)$, from Eq.~(\ref{eq:etaphiapp}) we have
\begin{align*}
\begin{split}
    \eta &= y_y, \\
    \phi &= y_z + o(y_y, y_z).
\end{split}
\end{align*}
Hence,
\begin{align*}
\begin{split}
    \Delta R_{ij} &= \sqrt{(\eta_i-\eta_j) ^2 + (\phi_i - \phi_j)^2} \\
      &= \sqrt{(y_{y,i} - y_{y,j})^2 + (y_{z,i} - y_{z,j})^2} + o(y_y, y_z).
\end{split}
\end{align*}
For a boost on the $x$ axis with rapidity $w$, for each particle, given $p_x = E (1-y_y^2 - y_z^2)^\frac{1}{2} = E + o(1)$,
we have
\begin{align*}
\begin{split}
    E' &= E\;\cosh w + p_x\;\sinh w= e^w E + o(1), \\
\end{split}
\end{align*}
$p_x' = e^w E + o(1)$, $p_y'=p_y$ and $p_z'=p_z$. Hence, we have
\begin{align*}
\begin{split}
    y_y' &= e^{-w} y_y + o(y_y,y_z), \\
    y_z' &= e^{-w} y_z + o(y_y,y_z).
\end{split}
\end{align*}
Therefore, for each particle $i$ or pair $(i,j)$, after transformation,
\begin{align*}
\begin{split}
    p_{{\rm T},i}' &= e^w \;p_{{\rm T},i} + o(1), \\
    \Delta R_{ij}' &= e^{-w} \;\Delta R_{ij} + o(y_{y,i}, y_{z,i}),
\end{split}
\end{align*}
Thus,
\begin{align*}
\begin{split}
    \Delta R_{ij}'(p_{{\rm T},i}' + p_{{\rm T},j}') = R_{ij}(p_{{\rm T},i}+ p_{{\rm T},j}) + o(y_{y,i}, y_{z,i})
\end{split}
\end{align*}
which is invariant in the first-order $y_{y,i},\,y_{z,i}\sim o(1)$ limit.
\end{proof}

\begin{proposition}
In the limit of $y_{y,i},\,y_{z,i}\sim o(1)$ for all particles $i$ and considering all particles as massless, the pairwise squared mass $m_{ij}^2 = p_{i}^\mu p_{j,\mu}$ is equal to $\frac{1}{2}\Delta R_{ij}^2 p_{{\rm T},i}\,p_{{\rm T},j}$ to the leading order.
\end{proposition}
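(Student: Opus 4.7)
The plan is to start from the Lorentz-invariant expression of the pairwise squared mass and reduce it to a three-dimensional geometric quantity. In the massless limit $E_i = |\mathbf{p}_i|$, so $m_{ij}^2 = p_i^\mu p_{j,\mu} = E_iE_j - \mathbf{p}_i\cdot\mathbf{p}_j = |\mathbf{p}_i||\mathbf{p}_j|\bigl(1 - \cos\theta_{ij}\bigr)$, where $\theta_{ij}$ is the spatial opening angle between particles $i$ and $j$. This identity is the same starting point already displayed in the main text just before the proposition, so the task reduces to extracting the leading behaviour of the two factors under the hypothesis.

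Next, I would exploit the hypothesis $y_{y,i},\,y_{z,i}\sim o(1)$ for all particles, which forces every momentum to lie close to the $x$-axis. This immediately makes $\theta_{ij}$ small, so $1-\cos\theta_{ij} = \tfrac12\theta_{ij}^2 + o(\theta_{ij}^2)$, and it also gives $p_x = E\sqrt{1-y_y^2-y_z^2} = E + o(1)$, hence $p_{\mathrm{T}} = \sqrt{p_x^2 + p_y^2} = E + o(1)$. Thus $|\mathbf{p}_i||\mathbf{p}_j| = E_iE_j = p_{\mathrm{T},i}\,p_{\mathrm{T},j}$ to leading order. What remains is to identify $\theta_{ij}^2$ with $\Delta R_{ij}^2$ at the same order.

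For the angular identification I would reuse the computation performed in the proof of Proposition~\ref{prop:1}: from Eq.~(\ref{eq:etaphiapp}) one has $\eta = y_y + o(y_y,y_z)$ and $\phi = y_z + o(y_y,y_z)$, so
\begin{equation*}
\Delta R_{ij}^2 = (y_{y,i}-y_{y,j})^2 + (y_{z,i}-y_{z,j})^2 + o\bigl((y_y,y_z)^2\bigr).
\end{equation*}
On the other hand, writing each unit momentum as $\hat{\mathbf{p}} = (1,\, y_y,\, y_z)/\sqrt{1+y_y^2+y_z^2}$ and computing $\cos\theta_{ij} = \hat{\mathbf{p}}_i\cdot\hat{\mathbf{p}}_j$, a Taylor expansion to second order in the small components yields exactly $\theta_{ij}^2 = (y_{y,i}-y_{y,j})^2 + (y_{z,i}-y_{z,j})^2 + o((y_y,y_z)^2)$. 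Combining the three pieces gives $m_{ij}^2 = \tfrac12\,\Delta R_{ij}^2\,p_{\mathrm{T},i}\,p_{\mathrm{T},j}$ to leading order.

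The main obstacle is purely bookkeeping: one must ensure that the small-angle approximation of $1-\cos\theta_{ij}$, the small-$y$ expansion of the three-dimensional opening angle, and the approximation $p_{\mathrm{T}} \approx E$ are all truncated at a mutually consistent order so that no subleading contribution accidentally competes with the quoted result. No genuinely new physics input is required beyond what was already assembled for Proposition~\ref{prop:1}.
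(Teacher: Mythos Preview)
Your proposal is correct. The route you take differs in presentation from the paper's own proof: you pass through the geometric intermediate $m_{ij}^2 = |\mathbf{p}_i||\mathbf{p}_j|(1-\cos\theta_{ij})$ (the identity displayed in the main text) and then separately argue $\theta_{ij}^2 \approx \Delta R_{ij}^2$ via the unit-vector expansion, whereas the paper expands $m_{ij}^2 = E_iE_j - p_{x,i}p_{x,j} - p_{y,i}p_{y,j} - p_{z,i}p_{z,j}$ directly in the small parameters $y_{y},\,y_{z}$, substituting $p_x = E(1-\tfrac12 y_y^2-\tfrac12 y_z^2)$ and collecting the quadratic terms into $\tfrac12 E_iE_j\big[(y_{y,i}-y_{y,j})^2+(y_{z,i}-y_{z,j})^2\big]$. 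The two computations are equivalent at the level of the underlying Taylor expansion; your version makes the small-angle geometry explicit and reuses Proposition~\ref{prop:1} more transparently, while the paper's version avoids introducing $\theta_{ij}$ and the unit-vector parametrisation altogether, at the cost of slightly more component bookkeeping. Either way no new ingredient is required, and your own caveat about keeping all truncations at a consistent $o(y^2)$ order is the only subtlety.
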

\begin{proof}
According to Proposition~\ref{prop:1}, and given
\begin{align*}
\begin{split}
m_{ij} ^2 = E_iE_j - p_{x,i}p_{x,j} - p_{y,i}p_{y,j} - p_{z,i}p_{z,j},
\end{split}
\end{align*}
we have
\begin{align*}
\begin{split}
m_{ij} ^2 &= E_iE_j \\
 & \quad\quad - E_iE_j\Big(1-\frac{1}{2}y_{y,i}^2-\frac{1}{2}y_{z,i}^2\Big)\Big(1-\frac{1}{2}y_{y,j}^2-\frac{1}{2}y_{z,j}^2\Big) \\
 & \quad\quad - E_iE_jy_{y,i}y_{y,j} - E_iE_jy_{z,i}y_{z,j} + o(y^2) \\
 &= E_iE_j\Big(\frac{1}{2}y_{y,i}^2+\frac{1}{2}y_{z,i}^2 + \frac{1}{2}y_{y,j}^2+\frac{1}{2}y_{z,j}^2 \\
 & \quad\quad - y_{y,i}y_{y,j} - y_{z,i}y_{z,j}\Big) + o(y^2) \\
 &=\frac{1}{2}E_i E_j \Delta R_{ij}^2+ o(y^2) \\
 &=\frac{1}{2}\Delta R_{ij}^2 p_{{\rm T},i}\,p_{{\rm T},j}+ o(y^2).
\end{split}
\end{align*}
\end{proof}

\section{Training setup}\label{sec:config}
The training of PFN, ParticleNet, and LorentzNet models and their variants to incorporate pairwise or nodewise features is performed on an Nvidia RTX 3090 GPU. 
The original models of PFN and ParticleNet are taken from the public version provided in Ref.~\cite{pmlr-v162-qu22b} and the LorentzNet model is taken from Ref.~\cite{Gong:2022lye}.
The batch size is set as 512 for PFN, 512 for ParticleNet (128 for ParticleNet incorporating pairwise features), and 256 for LorentzNet. The initial learning rate (LR) is set as $2.5\times 10^{-3}$ for PFN, $1\times 10^{-2}$ for ParticleNet ($2.5\times 10^{-3}$ for ParticleNet incorporating pairwise features), and $2\times 10^{-3}$ for LorentzNet.

In our experiments using the top tagging dataset, the training is performed on a total of 20 epochs, with an epoch defined as a whole iteration over the used dataset. For our case where only a portion of training jets are used, they are selected by the event number \texttt{event\_no}, e.g., the $60\,000$ jets used for training are selected by \texttt{event\_no \% 20 == 0} from the training dataset.

In experiments using the JetClass dataset, when the training sample size is \{$60\,000$, $200\,000$, $500\,000$\}, the training is performed on 20 epochs, with an epoch defined as a whole iteration over the used dataset. In order to make the training controllable when the data size grows further, a different configuration is adopted. When the training size is \{$2\times 10^{6}$, $10\times 10^{6}$, $100\times 10^{6}$\}, the number of training epochs is 50, where each epoch is defined as iterating \{0.5, 0.25, 0.1\} of the selected training dataset. This allows the configuration to be the same as in Ref.~\cite{pmlr-v162-qu22b} when training on the full $100\times 10^{6}$ dataset.

For all the training, we employ the same optimizer and the scheduler as in Ref.~\cite{pmlr-v162-qu22b}. The Lookahead optimizer~\cite{zhang2019lookahead} with $k = 6$ and $\alpha = 0.5$ is used to minimize the cross-entropy loss. The inner optimizer is RAdam~\cite{Liu2020On} with $y_1 = 0.95$, $y_2 = 0.999$, and $\epsilon = 10^{-5}$.
For our training schedule, the LR remains constant for the first 70\% of the iterations and then decays exponentially, changes at beginning of every following epoch, down to 1\% of the initial value at the end of the training. In the training of LorentzNet and its variants, a weight decay of 0.01 is adopted according to Ref.~\cite{Gong:2022lye}.

\end{document}